\let\wfs@comment@comment\comment
\let\comment\@undefined
\let\wfs@changes@comment\comment
\let\comment\@undefined
\newcommand\comment{%
    \ifthenelse{\equal{\@currenvir}{comment}}
    {\wfs@comment@comment}
    {\wfs@changes@comment}%
}
\newtheorem{thm}{Theorem}[section]
\newtheorem{lem}[thm]{Lemma}
\newtheorem{prop}[thm]{Proposition}
\newtheorem*{Main}{Main Theorem}
\newtheorem{crit}[thm]{Criterion}
\newtheorem{defn}[thm]{Definition}
\newtheorem{rem}[thm]{Remark}
\title{On $3$-dimensional MRD codes of type $\langle x^{q^t},x+\delta x^{q^{2t}},G(x) \rangle$}
\author{Daniele Bartoli\thanks{Dipartimento di Matematica e Informatica, Universit\`a degli Studi di Perugia,  Perugia, Italy. daniele.bartoli@unipg.it}, and Francesco Ghiandoni\thanks{Dipartimento di Matematica e Informatica ``Ulisse Dini", Universit\`a  degli studi di Firenze, Firenze, Italy, francesco.ghiandoni@unifi.it}}
\begin{document}

\maketitle
\begin{abstract}
    In this work we present  results on the classification of  $\mathbb{F}_{q^n}$-linear MRD codes of dimension three. In particular, using connections with certain algebraic varieties  over finite fields, we provide non-existence results for MRD codes $\mathcal{C}=\langle x^{q^t}, F(x), G(x) \rangle \subseteq \mathcal{L}_{n,q}$ of exceptional type, i.e. such that $\mathcal{C}$  is MRD over infinite many extensions of the field $\mathbb{F}_{q^n}$.
These results  partially address a conjecture of Bartoli, Zini and Zullo in 2023.
\end{abstract}
\section{Introduction}
Let $q$ be a prime power, $n$ be a positive integer, and
denote by $\mathbb{F}_{q^n}$ the finite field with $q^n$ elements and by $\mathbb{P}^N(\mathbb{K})$ (resp.
$\mathbb{A}^N(\mathbb{K})$)  the $N$-dimensional projective (resp. affine)
space over the field $\mathbb{K}.$\\
Let $\mathcal{L}_{n,q}= \{\sum_{i=0}^{n-1}
 a_iX^{q^i} : a_i \in \mathbb{F}_{q^n} \}$ denote the $\mathbb{F}_{q}$-algebra of the $\mathbb{F}_{q}$-linearized polynomials (or $q$-polynomials) of
$q$-degree smaller than $n.$ For any $f(x) =
\sum_{i=0}^{n-1}a_iX^{q^i},$ 
we define $\deg_q(f(X)) = \max\{i: a_i \neq 0\}$ and $\min\deg_q(f(X)) = \min\{i: a_i \neq 0\}.$ We identify a polynomial $g(X) \in \mathcal{L}_{n,q}$ with the $\mathbb{F}_{q}$-linear map $x \mapsto g(x)$ over $\mathbb{F}_{q^n};$ in this way, $\mathbb{F}_{q}$-linearized polynomials over $\mathbb{F}_{q^n}$ are in one-to-one correspondence with $\mathbb{F}_q$-linear maps over $\mathbb{F}_{q^n}.$ 
\\
The $rank$ $metric$ on the $\mathbb{F}_q$-vector space $\mathbb{F}^{m\times n}_{q}$ is defined by \[d(A,B):=\textnormal{rank}(A-B) \ \ \textnormal{for} \ A,B \in \mathbb{F}^{m\times n}_{q} . \] We call a subset of $\mathbb{F}^{m\times n}_{q}$ equipped with the rank metric a $rank$-$metric$ $code.$ For a rank-metric code $\mathcal{C}$ containing at last two elements, its $minimum$ $distance$ is given by \[d(\mathcal{C}):= \min_{A,B\in \mathcal{C}, A \neq B} d(A,B).\]
When $\mathcal{C}$ is an $\mathbb{F}_q$-subspace of $\mathbb{F}^{m\times n}_{q},$ we say that $\mathcal{C}$ is an $\mathbb{F}_q$-$linear$ code of dimension $\dim_{\mathbb{F}_q}(\mathcal{C}).$
Under the assumption that $m \leq n,$ it is well known (and easily verified)
that every rank-metric code $\mathcal{C}$ in $\mathbb{F}^{m\times n}_{q}$ with minimum distance $d$ satisfies the Singleton-like  bound \[|\mathcal{C}| \leq q^{n(m-d+1)}.\]
In case of equality, $\mathcal{C}$ is called a $maximum$ $rank$-$metric$ code, or MRD code for short. MRD codes have been studied since the 1970s by Delsarte \cite{delsarte} and Gabidulin \cite{gabidulin} and have seen much interest in recent years due to an important application in network coding and cryptography \cite{koetter}. \\
From a different perspective, rank-metric codes can also be seen as sets of (restrictions of) $\mathbb{F}_q$-linear
homomorphisms from $(\mathbb{F}_{q^n})^m
$ to $\mathbb{F}_{q^n}$
 equipped with the rank metric; see \cite[Sections 2.2 and 2.3]{articolosequenze}.
In this case, it is evident that multivariate linearized polynomials can be seen
as the natural algebraic counterpart of rank-metric codes.
In particular, when $m=n,$ a rank
metric code $\mathcal{C}$ can be seen as set of $\mathbb{F}_q$-linear endomorphisms of $\mathbb{F}_{q^n},$ i.e. $\mathcal{C} \subseteq \mathcal{L}_{n,q}.$ From now on we will consider $m=n$ and $d<n.$
In the case of univariate linearized polynomials, Sheekey pointed
out in \cite{sheekey} the following connection between  $\mathbb{F}_{q^n}$-linear MRD codes and the so called $scattered$ polynomials: $\mathcal{C}_{f} = \langle x
, f(x)\rangle_{\mathbb{F}_{q^n}}$ is an MRD code with
$\dim_{\mathbb{F}_{q^n}} (\mathcal{C}) = 2$ if and only if
\[\dim_{\mathbb{F}_q}\ker(f(x)-mx) \leq 1 \] for every $m \in \mathbb{F}_{q^n}.$ The concept of scattered polynomial  introduced in \cite{sheekey}
has been slightly generalized in \cite{bartoli Ip eq 0 o m}.
\begin{defn}\cite{bartoli Ip eq 0 o m, sheekey}
An $\mathbb{F}_q$-linearized polynomial $f(X)\in \mathbb{F}_{q^n}[X]$ is called a \textbf{scattered} polynomial of index $t \in \{0,\dots,n-1\}$ if  \[\dim_{\mathbb{F}_q}\ker(f(x)-mx^{q^t}) \leq 1, \] for every $m \in \mathbb{F}_{q^n}.$ Also, a scattered polynomial of index $t$ is \textbf{exceptional} if it is scattered of index $t$ over infinitely many extensions $\mathbb{F}_{q^{nm}}$ of $\mathbb{F}_{q^{n}}.$ 
    \end{defn}
While several families of scattered polynomials have been
constructed in recent years \cite{bartolizanellazullo, BlokhuisLavrauw, CsajbókMarinoPolverinoZanella, CsajbókMarinoZullo, LongobardiZanella, LunardonTrombettiZhou, NeriSantonastasoZullo, sheekey, zanella}, only two families of
exceptional ones are known:
\begin{itemize}
    \item [(Ps)] $f(x)=x^{q^t}$ of index $0,$ with $\gcd(t,n)=1$ (polynomials of so-called pseudoregulus type);
    \item [(LP)] $f(x)=x+\delta x^{q^{2t}}$ of index $t,$ with $\gcd(t,n)=1$ and $N_{q^n/q}(\delta)=\delta^{(q^n-1)/(q-1)}\neq 1$ (so called LP polynomials).
\end{itemize}
From a coding theory point of view, if $f$ is exceptional scattered of index $t,$ the corresponding rank distance code $\mathcal{C}^m_{f,t} = \langle x^{q^t}
, f(x)\rangle_{\mathbb{F}_{q^{mn}}}\subseteq \mathcal{L}_{nm,q}$ turns out to be an MRD code for infinitely many $m;$ codes of this kind are called exceptional $\mathbb{F}_{q^n}$-linear MRD codes (see \cite{bartolizinizullo}). 
Moreover, in \cite{articolosequenze} the authors  introduce the  notions of $h$-scattered sequences and exceptional $h$-scattered sequences which parameterize exceptional MRD codes.
Only two families of exceptional $\mathbb{F}_{q^n}$-linear MRD codes are
known so far:
\begin{itemize}
    \item [(G)] $\mathcal{G}_{r,s} = \langle x, x^{q^s}
, \dots , x^{q^{s(r-1)}} \rangle_{\mathbb{F}_{q^n}},$  with $\gcd(s,n) = 1,$
see \cite{delsarte,gabidulin};
       \item [(T)] $\mathcal{H}_{r,s}(\delta) = \langle x^{q^s}
,\dots, x^{q^{s(r-1)}}
, x + \delta x^{q^{sr}}
\rangle_{\mathbb{F}_{q^n}},$ with
$\gcd(s,n) = 1$ and $N_{q^n / q}(\delta) \neq (-1)^{nr}$ see \cite{LunardonTrombettiZhou,sheekey}.
\end{itemize}
    
The first family is known as generalized Gabidulin codes and
the second one as generalized twisted Gabidulin codes. \\
In \cite{bartolizhou} it has been shown
that the only exceptional $\mathbb{F}_{q^n}$-linear MRD codes spanned by
monomials are the codes (G), in connection with so-called
Moore exponent sets, while in \cite{bartolizinizullo} the authors investigated exceptional $\mathbb{F}_{q^n}$-linear
MRD codes not generated by monomials and proved that an exceptional $r$-dimensional $\mathbb{F}_{q^n}$-linear MRD code 
contains an exceptional scattered polynomial (see Theorem \ref{thm bartolizinizullo}). \\
 Motivated by this last necessary condition on MRD codes of exceptional type, we considered codes of type $\mathcal{C}=\langle x^{q^t}, F(x), G(x) \rangle_{\mathbb{F}_{q^n}}$ and we address a conjecture in \cite{bartolizinizullo} for $r=3$ and $F(x)$ a LP polynomial.
 The techniques that we use to prove our results are from algebraic geometry over finite fields. This approach has already proved successful in the investigation of families of functions with many applications in
 coding theory and cryptography; see for instance \cite{aubry mcguire rodier 7,BartoliFG,blmt,BartoliTimpanella,hernando mcguire,janwa mguire wilson,jedlicka,TZ}.
Our main result can be summarized as follows (see Theorem \ref{thm finale}).
\begin{Main} 
    If $(t,q)\notin \{(1,3);(1,4);(1,5);(2,3);(2,4);(2,5),(4,3)\},$ then there are no exceptional $3$-dimensional $\mathbb{F}_{q^n}$-linear MRD codes of type $\mathcal{C}=\langle x^{q^t}, x+\delta x^{q^{2t}}, G(x) \rangle \subseteq \mathcal{L}_{n,q},$ with   $\deg_q(G(x))>2t.$
\end{Main}

\section{Preliminaries on algebraic curves and varieties}
Let $F(X,Y)\in \mathbb{K}[X,Y]$, $\mathbb{K}$ a field, be a polynomial defining an affine plane curve $\mathcal{C}: F(X,Y)=0$. A plane curve is absolutely irreducible if there are no non-trivial factorizations of its defining polynomial $F(X,Y)$ in $\overline{\mathbb{K}}[X,Y]$, where $\overline{\mathbb{K}}$ is the algebraic closure of $\mathbb{K}$. If $F(X,Y)=\prod_i F^{(i)}(X,Y)$, with $F^{(i)}(X,Y)\in \overline{\mathbb{K}}[X,Y]$ of positive degree, then $\mathcal{C}_i: F^{(i)}(X,Y)=0$ are called components of $\mathcal{C}$. A component is $\mathbb{F}_q$-rational if  it is fixed by the Frobenius morphism $\varphi$ or equivalently  $\lambda F^{(i)}(X,Y)\in \mathbb{K}[X,Y]$ for some $\lambda \in \overline{\mathbb{K}}$. 

Let $P=(u,v)\in \mathbb{A}^2(\mathbb{K})$ be a point in the plane, and write

\[
F(X+u,Y+v)=F_0(X,Y)+F_1(X,Y)+F_2(X,Y)+\cdots,
\]
where $F_i$ is either zero or homogeneous of degree $i$. The \emph{multiplicity} of $P\in \mathcal{C}$, written as $m_P(\mathcal{C})$ or $m_P(F)$, is the smallest integer $m$ such that $F_m\ne 0$ and $F_i=0$ for $i<m$;  $F_m=0$ is the \emph{tangent cone} of $\mathcal{C}$ at $P$. A linear component of the tangent cone is called a \emph{tangent} of $\mathcal{C}$ at $P$. The point $P$ is on the curve $\mathcal{C}$ if and only if $m_P(\mathcal{C})\ge 1$. If $P$ is on $\mathcal{C}$, then $P$ is a \emph{simple} point of $\mathcal{C}$ if $m_P(\mathcal{C})=1$, otherwise $P$ is a \emph{singular} point of $\mathcal{C}$. It is possible to define in a similar way the multiplicity of an ideal point of $\mathcal{C}$, that is a point of the curve lying on the line at infinity. We denote by $Sing(\mathcal{C})$ the set of singular points of the curve $\mathcal{C}$.

Given two plane curves $\mathcal{A}$ and $\mathcal{B}$ and a point $P$ on the plane, the \emph{intersection number} (or \emph{intersection multiplicity})  $I(P, \mathcal{A} \cap \mathcal{B})$ of $\mathcal{A}$ and $\mathcal{B}$ at the point $P$ can be defined by seven axioms. We do not include its precise and long definition here. For more details, we refer to \cite{fulton} and \cite{HKT} where the intersection number is defined equivalently in terms of local rings and in terms of resultants, respectively.

For a given plane curve $\mathcal{C}$ and a point $P\in \mathcal{C}$, we denote by $I_{P,max}(\mathcal{C})$ the maximum
possible intersection multiplicity of two components of  $\mathcal{C}$ at $P \in Sing(\mathcal{C}).$ 
We list here two useful results in this direction.


\begin{lem}  \cite[Section 3.3]{fulton}   \label{lemma easy} \cite[Lemma 4.3]{schmidt Ip eq zero o m} \cite[Lemma 2.5]{bartoli Ip eq 0 o m} Let  $q$ be a prime power and $F(X,Y)\in \mathbb{F}_{q}[X,Y]$. Let  $P=(\alpha,\beta)\in \mathbb{F}_{q}^{2}$ and write  \[F(X+\alpha,Y+\beta)=F_{m}(X,Y)+F_{m+1}(X,Y)+\dots,\] where $F_{i}\in \mathbb{F}_{q}[X,Y]$ is   zero or homogeneous of degree  $i$ and  $F_{m} \neq 0.$ The following properties hold. \begin{itemize}
	    \item[(i)] If  $F_{m}(X,Y)$ is separable, then $I_{P,max}(\mathcal{C})\leq \lfloor m^2/2\rfloor.$
     \item[(ii)] Suppose that  $F_m=L^m$ with  $L$  a linear  form; 
     \begin{itemize}
         \item if $L \nmid F_{m+1}$ then $I_{P,max}(\mathcal{C})=0,$
         \item if 
	$L^2 \nmid F_{m+1} $  then  $I_{P,max}(\mathcal{C})\leq m.$
     \end{itemize}
	\end{itemize} 
\end{lem}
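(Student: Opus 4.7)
The plan is to combine two standard tools from the local study of plane curves: the classical formula $I(P, \mathcal A \cap \mathcal B) = m_P(\mathcal A) \cdot m_P(\mathcal B)$ valid whenever the two curves share no tangent direction at $P$ (see \cite[Section 3.3]{fulton}), and the Newton--Puiseux analysis of analytic branches of $F$ at $P$, which handles the situation in which two components share a common tangent.

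For part (i) I would pick any pair of components $\mathcal A, \mathcal B$ of $\mathcal C$ through $P$ and write $a = m_P(\mathcal A)$, $b = m_P(\mathcal B)$. The tangent cone $F_m$ factors, up to a nonzero constant contributed by components not through $P$, as the product of the tangent cones of the components through $P$, so $(F^A)_a (F^B)_b$ divides $F_m$. Separability of $F_m$ then forces $(F^A)_a$ and $(F^B)_b$ to be coprime squarefree forms, so $\mathcal A$ and $\mathcal B$ share no common tangent at $P$ and the classical formula yields $I(P, \mathcal A \cap \mathcal B) = ab$. Since $a + b \le m$, the AM--GM inequality gives $ab \le \lfloor m^2/4 \rfloor \le \lfloor m^2/2 \rfloor$.

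For the first bullet of (ii), after a linear change of coordinates I may assume $L = Y$. Then $L \nmid F_{m+1}$ means that the coefficient of $X^{m+1}$ in $F_{m+1}$ is nonzero, and the Newton polygon of $F$, viewed in $\mathbb K[[X]][Y]$, consists of a single edge joining $(0, m+1)$ and $(m, 0)$ of slope $-(m+1)/m$. Since $\gcd(m+1, m) = 1$, the Newton--Puiseux theorem produces a unique branch; hence $F$ is analytically irreducible at $P$, only one algebraic component of $\mathcal C$ passes through $P$, and $I_{P,\max}(\mathcal C) = 0$. For the second bullet, if $L \nmid F_{m+1}$ the first bullet applies; otherwise $L \mid F_{m+1}$ and $L^2 \nmid F_{m+1}$. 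Every component through $P$ has tangent cone a pure power of $L$, so writing $F = \prod_i F_i$ and isolating the degree-$(m+1)$ summand gives
\[
F_{m+1} = \sum_i (F_i)_{m_i + 1} \prod_{j \ne i} (F_j)_{m_j},
\]
in which every summand is divisible by $L^{m - m_i}$ (by $L^m$ if $m_i = 0$). The condition $L^2 \nmid F_{m+1}$ then forces some $m_i \ge m - 1$, and a short case analysis shows that at most two components pass through $P$, with multiplicities $1$ and $m - 1$. Parametrizing the smooth component $\mathcal A$ as $Y = \phi(X)$ with $\mathrm{ord}_X \phi \ge 2$ reduces the intersection multiplicity to $\mathrm{ord}_X F^B(X, \phi(X))$; the leading contribution is the monomial $\beta X^m$ from $(F^B)_m(X, \phi(X))$, whose coefficient $\beta$ is nonzero precisely because $L^2 \nmid F_{m+1}$, while $\phi(X)^{m-1}$ has $X$-order at least $2(m-1) \ge m$. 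This gives $I(P, \mathcal A \cap \mathcal B) \le m$.

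The main obstacle I anticipate is the bookkeeping in the second bullet of (ii): converting the algebraic hypothesis $L^2 \nmid F_{m+1}$ into tight structural constraints on the components of $\mathcal C$ through $P$, then executing the Puiseux computation precisely enough to extract the bound. Parts (i) and the first bullet of (ii) reduce to one-line applications of, respectively, the no-common-tangent formula and a Newton polygon with coprime endpoints.
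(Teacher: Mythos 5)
The paper does not actually prove this lemma: it is quoted from \cite{fulton}, \cite{schmidt Ip eq zero o m} and \cite{bartoli Ip eq 0 o m}, so there is no internal proof to compare with. Your argument is essentially the standard one from those sources: for (i), tangent cones multiply, separability of $F_m$ forbids a common tangent of two components, and $I(P,\mathcal A\cap\mathcal B)=ab\le \lfloor m^2/4\rfloor$ (note this is sharper than the stated $\lfloor m^2/2\rfloor$, and $\lfloor m^2/4\rfloor$ is in fact the bound the paper later uses in Proposition \ref{prop stime ipmax caso affine}); for the first bullet of (ii), your Newton-polygon/one-edge-with-coprime-endpoints argument correctly yields a unique branch, hence a single component through $P$. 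These parts are fine.

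There is, however, a genuine gap in your treatment of the second bullet, localized at $m=2$. Your bookkeeping correctly forces the components through $P$ to have multiplicities $1$ and $m-1$, but the final estimate needs $\operatorname{ord}_X\phi(X)^{m-1}>m$, not merely $\ge m$: you use $2(m-1)\ge m$, and when $m=2$ the term $c\,\phi(X)^{m-1}$ has order exactly $m$ and can cancel $\beta X^m$. Moreover, for $m=2$ the identity ``$\beta\neq 0$ precisely because $L^2\nmid F_{m+1}$'' breaks down, since $F_{m+1}\bmod L^2$ then also picks up the quadratic part of the smooth factor. This is not a repairable slip in the proof: for $F=(Y+X^2)(Y+X^2+X^3)$ in odd characteristic one has $m=2$, $F_2=Y^2=L^2$, $F_3=2X^2Y$, so $L\mid F_3$ and $L^2\nmid F_3$, yet the two components meet at the origin with intersection multiplicity $3>m$; so the second bullet as stated requires $m\ge 3$ (or characteristic $2$ when $m=2$). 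For $m\ge 3$ your computation is correct, since then $2(m-1)\ge m+1$ and all other contributions to the coefficient of $X^m$ vanish, giving $I(P,\mathcal A\cap\mathcal B)=m$. The defect is harmless for this paper, which only invokes the bullet with $m=q^t\ge 3$ or $m=q^{t/2}\ge 3$, but you should add the hypothesis $m\ge 3$ (or treat $m=2$ separately) to make your write-up correct.
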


\begin{crit}\label{criterio due noni}  \cite[Lemma 11]{jedlicka}
Let	  $\mathcal{C}: h(X,Y)=0$   be a curve of degree $n$  defined over $\mathbb{F}_{q}$.

    If 	$ \displaystyle\sum_{P \in Sing(h)}I_{P,max}(\mathcal{C}) < \frac{2}{9}\deg^{2}(h)  $  \ \ \
 then  $\mathcal{C}$   possesses at least one absolutely irreducible component defined over $\mathbb{F}_{q}.$
	\end{crit}

Consider the set $\overline{\mathbb{F}_{q}}[[t]]$ of the formal power
series on $t.$ Let $(x_0, y_0)\in \overline{\mathbb{F}_{q}}^2$ be an affine point of $\mathcal{C}: F(X, Y)= 0.$ A $branch$ of center
$(x_0, y_0)$ of $\mathcal{C}$ is a point $(x(t), y(t)) \in (\overline{\mathbb{F}_{q}}[[t]])^2$ such that $F(x(t), y(t)) = 0,$ where
\begin{eqnarray*}
 x(t) &=& x_0 + u_1t + u_2t^2 + \dots, \\
y(t) &=& y_0 + v_1t + v_2t^2 + \dots.   
\end{eqnarray*}
See \cite[Chapter 4]{HKT} for more details on branches. There exists a unique branch centered
at a simple point of $\mathcal{C}.$ If there exists only a branch centered in a point $P\in \mathcal{C}$ then $I_{P,max}(\mathcal{C})=0.$ 

\begin{rem}
    In order to determine branches centered at singular points of a curve $\mathcal{C}$ we
make use of quadratic transformations; see \cite[Section 4]{HKT}. Consider
a curve $\mathcal{C}$ defined by $$F(X, Y ) = F_r(X, Y) + F_{r+1}(X, Y) + \dots = 0,$$ where each $F_i(X, Y)$
is homogeneous in $X$ and $Y$ and of degree $i$ and $F_r \neq 0.$ First, we can suppose that the singular
point under examination is the origin $O = (0, 0)$ and that $X = 0$ is not a tangent line at $O.$ Let $r$ be its multiplicity. The geometric transform of a curve $\mathcal{C}$ is the curve $\mathcal{C}'$ given
by $F'(X, Y) = F(X, XY)/X^r.$ (If $Y = 0$ is not a tangent line at $O$ then we can also
consider $\mathcal{C}'$ defined by $F'
(X, Y) = F(XY,Y)/Y^r).$ By \cite[Theorem 4.44]{HKT}, there
exists a bijection between the branches of $\mathcal{C}$ centered at the origin and the branches of
$\mathcal{C}'$ centered at an affine point on $X = 0.$ In our proofs we will perform chains of local
quadratic transformations until the total number of branches is determined. In particular,
if the tangent cone $F_r(X, Y )$
at $O$ splits into non-repeated linear factors (over the algebraic closure), then there are precisely $r$ distinct branches centered at $O.$ In fact, distinct linear factors
of $F_r(X, Y )$ correspond to distinct affine points of $\mathcal{C}'$ on $X = 0.$
   \end{rem}

An algebraic hypersurface is an algebraic variety that may be defined by a single polynomial equation. An algebraic
hypersurface defined over a field $\mathbb{K}$ is $absolutely$ $irreducible$
if the associated polynomial is irreducible over every algebraic extension of $\mathbb{K}.$ An absolutely irreducible $\mathbb{K}$-rational
component of a hypersurface $\mathcal{V},$ defined by the polynomial
$F,$ is simply an absolutely irreducible hypersurface which is
associated to a non-costant factor of $F$ defined over $\mathbb{K}.$ 
\begin{lem} \label{sette Lemma 2.1} \cite[Lemma 2.1]{aubry mcguire rodier 7}
	Let $X,H \subseteq \mathbb{P}^{N}(\mathbb{F}_{q})$ be   projective hypersurfaces. If  $X \cap H$  has a non-repeated absolutely irreducible component  defined over  $\mathbb{F}_{q}$, then  $X$  has an absolutely irreducible component  defined over $\mathbb{F}_{q}.$
\end{lem}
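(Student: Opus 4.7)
The plan is to assume that an exceptional $3$-dimensional $\mathbb{F}_{q^n}$-linear MRD code $\mathcal{C}=\langle x^{q^t}, x+\delta x^{q^{2t}}, G(x) \rangle$ with $\deg_q G>2t$ exists and to derive a contradiction by exhibiting, via Lang--Weil, an absolutely irreducible $\mathbb{F}_{q^n}$-rational component of an associated algebraic surface lying outside a trivial locus. Concretely, the code fails to be MRD over $\mathbb{F}_{q^{nm}}$ precisely when some nonzero $\mathbb{F}_{q^{nm}}$-combination of the three generators has $\mathbb{F}_q$-kernel of dimension $\ge 3$, equivalently when there exist three $\mathbb{F}_q$-linearly independent elements $x_1,x_2,x_3\in\mathbb{F}_{q^{nm}}$ annihilating the determinant
\[
H(x_1,x_2,x_3)=\det\begin{pmatrix} x_1^{q^t} & x_1+\delta x_1^{q^{2t}} & G(x_1) \\ x_2^{q^t} & x_2+\delta x_2^{q^{2t}} & G(x_2) \\ x_3^{q^t} & x_3+\delta x_3^{q^{2t}} & G(x_3) \end{pmatrix}=0.
\]
Let $\mathcal{V}\subset\mathbb{A}^3$ be the affine surface $H=0$ and let $\mathcal{T}$ denote the union of $\mathbb{F}_q$-hyperplanes encoding $\mathbb{F}_q$-linear dependence of the triple. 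Exceptional MRD-ness translates to the requirement that every absolutely irreducible $\mathbb{F}_{q^n}$-rational component of $\mathcal{V}$ be contained in $\mathcal{T}$, so it suffices to produce one that is not.

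The core of the argument is to reduce from $\mathcal{V}$ to a plane curve. I would intersect $\mathcal{V}$ with a carefully chosen hyperplane in $\mathbb{A}^3$ (for instance, specializing $x_3$ to a generic value in $\mathbb{F}_{q^n}$ after a change of variable adapted to the structure of $G$, so that the resulting slice is transverse to $\mathcal{T}$), and apply Lemma \ref{sette Lemma 2.1} to lift an absolutely irreducible non-repeated $\mathbb{F}_{q^n}$-rational component of the plane section back to $\mathcal{V}$. On that plane curve, of degree comparable to $q^{\deg_q G}$, one removes the ``trivial'' linear factors corresponding to $\mathcal{T}$ and then applies Criterion \ref{criterio due noni}: this requires bounding $\sum_{P\in Sing} I_{P,\max}$ by $\tfrac{2}{9}\deg^2$. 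For each singular point the tangent cone $F_m$ is computed and Lemma \ref{lemma easy} is invoked, giving $I_{P,\max}\le\lfloor m^2/2\rfloor$ when $F_m$ is separable and the sharper estimates $I_{P,\max}\le m$ or $I_{P,\max}=0$ when $F_m=L^m$, controlled by a divisibility test on $F_{m+1}$; the branch analysis via the quadratic transformations recalled in the preliminaries is used to rule out hidden high-multiplicity intersections at non-obvious singularities, notably at infinity.

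The main obstacle is the precise geometric analysis of the singular locus, whose structure depends sensitively on $t$, on $\delta$, and on the exponents appearing in $G$. Different shapes of $G$ (monomial leading term versus several top terms, presence or absence of monomials of $q$-degree in the interval $(2t,\deg_q G)$) yield different tangent cones and must be treated separately, each time verifying that the separable or power-of-linear case of Lemma \ref{lemma easy} applies and that $F_{m+1}$ satisfies the required non-divisibility. The assumption $\deg_q G>2t$ forces the plane curve to have degree large enough for the $\tfrac{2}{9}$-threshold to be reachable, and the excluded pairs $\{(1,3),(1,4),(1,5),(2,3),(2,4),(2,5),(4,3)\}$ are exactly those for which the degree of the curve is too small relative to the number and multiplicity of the singular points, so that the total $\sum I_{P,\max}$ produced by Lemma \ref{lemma easy} overshoots $\tfrac{2}{9}\deg^2$ and Criterion \ref{criterio due noni} fails. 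Once the criterion applies, Lemma \ref{sette Lemma 2.1} furnishes the desired absolutely irreducible $\mathbb{F}_{q^n}$-rational component of $\mathcal{V}$ outside $\mathcal{T}$, contradicting exceptionality.
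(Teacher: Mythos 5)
Your proposal does not prove the statement at hand. The statement is Lemma \ref{sette Lemma 2.1} itself (the result quoted from Aubry--McGuire--Rodier): for two projective hypersurfaces $X,H\subseteq\mathbb{P}^N(\mathbb{F}_q)$, if $X\cap H$ has a non-repeated absolutely irreducible component defined over $\mathbb{F}_q$, then $X$ has an absolutely irreducible component defined over $\mathbb{F}_q$. What you have written is instead a sketch of the paper's \emph{main theorem} (Theorem \ref{thm finale}) on the non-existence of exceptional codes $\langle x^{q^t},x+\delta x^{q^{2t}},G(x)\rangle$, and in the course of that sketch you explicitly \emph{invoke} Lemma \ref{sette Lemma 2.1} as a known tool. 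As an argument for the lemma this is circular, and in any case nowhere in your text do the two arbitrary hypersurfaces $X$ and $H$, the notion of a non-repeated component of $X\cap H$, or the descent of rationality from a component of the intersection to a component of $X$ appear. So there is a genuine gap: the wrong statement is being argued.

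For the record, the actual proof is short and purely about the Frobenius action on absolutely irreducible components. Let $Y$ be the non-repeated absolutely irreducible $\mathbb{F}_q$-rational component of $X\cap H$, and decompose $X$ into its absolutely irreducible components $X_1,\dots,X_s$ over $\overline{\mathbb{F}_q}$, so that $X\cap H=\bigcup_i (X_i\cap H)$. Being irreducible, $Y$ lies in some $X_i\cap H$; if it lay in two distinct $X_i\cap H$ and $X_j\cap H$, it would occur as a repeated component of $X\cap H$, contradicting the hypothesis, so the component $X_i\supseteq Y$ is unique. Since $Y$ is defined over $\mathbb{F}_q$, applying the Frobenius $\varphi$ gives $Y=\varphi(Y)\subseteq\varphi(X_i)$, and $\varphi(X_i)$ is again a component of $X$ containing $Y$; uniqueness forces $\varphi(X_i)=X_i$, hence $X_i$ is (up to a scalar multiple of its defining polynomial) defined over $\mathbb{F}_q$, which is the desired absolutely irreducible $\mathbb{F}_q$-rational component of $X$. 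None of the machinery in your proposal (Moore determinants, Criterion \ref{criterio due noni}, the singularity analysis, Lang--Weil) is relevant to this statement.
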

In our investigation we will need bounds on the number of $\mathbb{F}_{q}$-rational points of algebraic varieties and we will make use of the following result a number of times.

\begin{thm}\cite[Theorem 7.1]{MR2206396} \label{thm cafure matera}
Let $\mathcal{W}$ be an absolutely irreducible variety defined over $\mathbb{F}_q$ of dimension $n$ and degree $d$. If $q>2(n+1)d^2$, then 
$$\#(\mathcal{W}\cap \mathbb{A}^N(\mathbb{F}_q))\geq q^n-(d-1)(d-2)q^{n-1/2}+5d^{13/3} q^{n-1}.$$
\end{thm}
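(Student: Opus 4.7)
The plan is to attach to the putative code $\mathcal{C}=\langle x^{q^t}, x+\delta x^{q^{2t}}, G(x)\rangle$ an affine $3$-fold whose $\mathbb{F}_{q^{nm}}$-rational points outside a "trivial" locus record the failure of the MRD property. Concretely, $\mathcal{C}$ is not MRD over $\mathbb{F}_{q^{nm}}$ exactly when there exist $(a,b,c)\ne 0$ with $\dim_{\mathbb{F}_q}\ker(ax^{q^t}+b(x+\delta x^{q^{2t}})+cG(x))\ge 3$; equivalently, there exist three $\mathbb{F}_q$-linearly independent $X,Y,Z\in\mathbb{F}_{q^{nm}}$ killed by a common nonzero coefficient vector. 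Eliminating $(a,b,c)$, this is the vanishing of
\[
D(X,Y,Z)=\det\!\begin{pmatrix} X^{q^t} & X+\delta X^{q^{2t}} & G(X)\\ Y^{q^t} & Y+\delta Y^{q^{2t}} & G(Y)\\ Z^{q^t} & Z+\delta Z^{q^{2t}} & G(Z)\end{pmatrix}.
\]
The surface $\mathcal{V}: D=0\subset\mathbb{A}^3$ will be the geometric object to analyze.

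The plan then is to peel off from $D$ the obvious "trivial" linear factors coming from the $\mathbb{F}_q$-linear dependencies $X=\alpha Y+\beta Z$ (with $\alpha,\beta\in\mathbb{F}_q$) and their symmetric analogues, on which $(X,Y,Z)$ does not produce a genuine violation of the MRD condition. Denote by $\tilde D$ the remaining factor, and by $\mathcal{S}:\tilde D=0$ the associated surface. The goal is to prove that $\mathcal{S}$ has an absolutely irreducible $\mathbb{F}_{q^n}$-rational component. To reduce to the plane-curve machinery already collected in the preliminaries, I would slice $\mathcal{S}$ with a generic hyperplane (say $Z=Z_0$) to produce a plane curve $\mathcal{C}_{Z_0}:\tilde D(X,Y,Z_0)=0$, then perform a careful singularity analysis at the origin and at the other candidate singular points using Lemma 2.5, and conclude via Criterion 2.2 (the $2/9$ bound of Jedlicka) that $\mathcal{C}_{Z_0}$ contains an absolutely irreducible $\mathbb{F}_{q^n}$-rational component. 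Lemma 2.3 then lifts this to an absolutely irreducible $\mathbb{F}_{q^n}$-rational component of $\mathcal{S}$.

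With such a component $\mathcal{W}\subseteq \mathcal{S}$ in hand, Theorem 3.1 (Cafure--Matera) yields, for $q^{nm}>2(2+1)(\deg\mathcal{W})^2$, a lower bound on $\#\mathcal{W}(\mathbb{F}_{q^{nm}})$ that dominates the number of points on $\mathcal{W}$ lying on any fixed union of finitely many trivial hyperplanes. Therefore, for all but finitely many $m$, we can produce $(X,Y,Z)\in \mathcal{V}(\mathbb{F}_{q^{nm}})$ with $X,Y,Z$ $\mathbb{F}_q$-linearly independent, contradicting the MRD property of $\mathcal{C}$ over $\mathbb{F}_{q^{nm}}$. This would rule out $\mathcal{C}$ being exceptional MRD.

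The main obstacle is the singularity/irreducibility analysis of $\mathcal{C}_{Z_0}$. Since $G(x)$ is an arbitrary $\mathbb{F}_{q^n}$-linearized polynomial with $\deg_q G>2t$, the shape of the tangent cones of $\tilde D$ depends on the leading $q$-degree $s$ of $G$ and on the arithmetic relations between $s$, $t$ and $2t$; checking the hypotheses of Lemma 2.5(i)/(ii) at every singular point and summing $I_{P,\max}$ against the $2/9$ threshold will require a case split according to $s\bmod t$ and on the vanishing/non-vanishing of various coefficients of $G$. The excluded pairs
\[
(t,q)\in\{(1,3),(1,4),(1,5),(2,3),(2,4),(2,5),(4,3)\}
\]
are precisely the small-parameter regimes in which either the $2/9$-estimate or the Cafure--Matera quantitative bound cannot be pushed through, so the heart of the argument is showing that outside this finite list the singular locus of $\tilde D$ is small enough, the leading forms are separable enough, and $\deg \tilde D$ is controlled enough to apply the combined Jedlicka + Aubry--McGuire--Rodier + Cafure--Matera machinery uniformly.
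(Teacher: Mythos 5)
Your proposal does not address the statement at hand. The statement you were asked to prove is the Cafure--Matera point-counting theorem: for an absolutely irreducible variety $\mathcal{W}$ over $\mathbb{F}_q$ of dimension $n$ and degree $d$, with $q>2(n+1)d^2$, the number of affine $\mathbb{F}_q$-rational points is bounded below by an expression of the form $q^n-(d-1)(d-2)q^{n-1/2}-O(d^{13/3})q^{n-1}$. This is a quantitative, effective Lang--Weil-type estimate; in the paper it is imported verbatim from Cafure and Matera \cite[Theorem 7.1]{MR2206396} and no proof is given or needed there. A genuine proof would have to engage with point counting on varieties over finite fields --- for instance, reduction to the curve case by intersecting with generic linear subvarieties or by birational projection to a hypersurface, followed by an application of the Weil bound for absolutely irreducible plane curves together with explicit control of the degrees and of the exceptional loci, which is how the effective constants $ (d-1)(d-2)$ and $d^{13/3}$ arise. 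None of this appears in your text.

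What you wrote instead is an outline of the paper's \emph{main} argument (the non-existence of exceptional $3$-dimensional MRD codes of the given shape): you set up the determinantal variety attached to $(x^{q^t},x+\delta x^{q^{2t}},G(x))$, remove the trivial $\mathbb{F}_q$-linear factors, argue for an absolutely irreducible $\mathbb{F}_{q^n}$-rational component via the $2/9$ criterion and the singularity analysis, and then \emph{invoke} the Cafure--Matera theorem to produce rational points off the trivial locus. In other words, the statement you were supposed to prove is used as a black box inside your argument, so the attempt is circular with respect to the assigned task and leaves the actual content of the theorem untouched. To repair this you would need to discard the MRD/Moore-set material entirely and give (or correctly cite and reconstruct) the point-counting argument itself; also note that the inequality as you would need it is a lower bound with the $d^{13/3}q^{n-1}$ term subtracted, so any proof must track the sign and size of that error term explicitly.
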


\section{Scattered sequences of order $1$ and MRD codes}
In this section we recall the notion of  scatteredness for subspaces and sequences in $\mathbb{F}_{q^n}^r,$
and how they are related to rank-metric codes.

\begin{defn}\cite{CMPZcombinatorica}
    Let $h, r,  n$ be positive integers, such that $h < r.$ An $\mathbb{F}_q$-subspace
$U \subseteq \mathbb{F}_{q^n}^r$  is said to be \textbf{$h$-scattered} if for every $h$-dimensional $\mathbb{F}_{q^n}$-subspace $H \subseteq \mathbb{F}_{q^n}^r,$
it holds $\dim_{\mathbb{F}_{q}}
(U \cap H) \leq h.$ When $h = 1,$ a $1$-scattered subspace is simply called \textbf{scattered}.
\end{defn}
For what concerns $h$-scattered subspaces, there is a well-known bound on their $\mathbb{F}_{q}$-dimension.
Namely, an $h$-scattered subspace $U \subseteq \mathbb{F}_{q^n}^r,$ if $U$ does not define a subgeometry, satisfies
\begin{equation} \label{eq upperbound hscattered}
   \dim_{\mathbb{F}_{q}}
(U ) \leq \dfrac{rn}{h+1};
\end{equation}

see \cite{CMPZcombinatorica}. An $h$-scattered subspace meeting (\ref{eq upperbound hscattered}) with equality is called a \textbf{maximum $h$-scattered subspace}.\\
Let $\mathcal{G} = \{g_1, \dots , g_k\}\subseteq \mathcal{L}_{n,q}[X_1,\dots,X_m],$ where $\mathcal{L}_{n,q}[X_1,\dots,X_m] :=\left\{\displaystyle\sum_{i=1}^{m}\displaystyle\sum_{j=0}^{n-1}\gamma_{i,j}X_i^{q^j} \ : \ \gamma_{i,j} \in \mathbb{F}_{q^n}\right\},$ and consider the $\mathbb{F}_q$-space
\begin{equation}
    U_{\mathcal{G}} := \{(g_1(x_1, \dots , x_m), \dots , g_r(x_1,\dots , x_m)) : x_1, \dots , x_m \in \mathbb{F}_{q^n}
 \} \subseteq  \mathbb{F}_{q^n}^r.
\end{equation}
\begin{defn} \cite{articolosequenze}
Let $\mathcal{I} := (i_1, i_2,\dots, i_m) \in (\mathbb{Z}/n\mathbb{Z})^m$
 and consider $f_1, \dots , f_r \in \mathcal{L}_{n,q}[X_1,\dots,X_m].$ \\ Let $U_{\mathcal{I},\mathcal{F}} := U_{\mathcal{F}'},$ where
$\mathcal{F}'= (X_1^{q^{i_1}},\dots, X_m^{q^{i_m}}, f_1, \dots , f_s)\subseteq \mathcal{L}_{n,q}[X_1,\dots,X_m].$ \\
The $s$-tuple $\mathcal{F} := (f_1, \dots , f_s)$ is said to be an $(\mathcal{I}; h)_{q^n}$-\textbf{scattered sequence} of order $m$ if  $U_{\mathcal{I},\mathcal{F}}$ is maximum $h$-scattered in $\mathbb{F}_{q^n}^{m+s}.$   \\
An $(\mathcal{I}; h)_{q^n}$-scattered sequence $\mathcal{F} := (f_1, \dots , f_s)$ of order $m$ is said to be
\textbf{exceptional} if it is $h$-scattered over infinitely many extensions $\mathbb{F}_{q^n}^{\ell}$ of $\mathbb{F}_{q^n}.$
\end{defn}
As the following remark shows, $(\mathcal{I}; h)_{q^n}$-scattered sequences, with $|\mathcal{I}|=1,$  have been considered also in \cite{bartolizinizullo}, though with slightly different terminology.  
\begin{rem} \label{rem link scatt seq and pol set}
    It is not difficult to see that, for $m=1$ and $\mathcal{I}=\{t\},$  an $(r-1)$-tuple $(f_2,\dots,f_r) \subseteq \mathcal{L}_{n,q}$ is a $(\mathcal{I}; r-1)_{q^n}$-scattered  (or simply $(t,r-1)_{q^n}$-scattered) sequence of order $1$ if and only if, for any $\alpha_1,\dots,\alpha_r \in \mathbb{F}_{q^n},$ 
  \begin{equation*}
        \det \begin{pmatrix}
  \alpha_1^{q^t} & f_2(\alpha_1) & \cdots  &  f_r(\alpha_1)\\
\alpha_2^{q^t} & f_2(\alpha_2) & \cdots  &  f_r(\alpha_2) \\
\vdots & \vdots & \cdots & \vdots \\
 \alpha_r^{q^t}  & f_2(\alpha_r) & \cdots  &  f_r(\alpha_r)
\end{pmatrix}=0 \ \ \ \ \Longrightarrow \ \  \dim_{\mathbb{F}_q}\langle \alpha_1,\dots,\alpha_r \rangle_{\mathbb{F}_q} < r;
    \end{equation*}
    see for istance \cite{bartolizhou,bartolizinizullo,zanella} for an explicit link between scattered spaces and Moore matrices.\\
    If the previous property holds, $\underline{f}=(x^{q^t},f_2,\dots,f_r)$ is said to be a \textbf{Moore polynomial set} for $q$ and $n$ of index $t$ (see \cite[Definition 9]{bartolizinizullo}).
\end{rem}


Moore polynomial sets can be characterized in terms of MRD codes as follows.

\begin{thm} \label{thm collegamento MRD moore set}\cite{bartolizinizullo}
    Let $r$ and $n$ be positive integers with $r \leq n+1,$ and let $\underline{f}=(x^{q^t},f_2(x),\dots,f_r(x)),$ where $x^{q^t},f_2(x),\dots,f_r(x) \in \mathcal{L}_{n,q}$ are $\mathbb{F}_{q^n}$-linearly independent. The $\mathbb{F}_{q^n}$-linear rank metric code \[\mathcal{C}_{\underline{f}}=\langle x^{q^t},f_2(x),\dots,f_r(x)\rangle_{\mathbb{F}_{q^n}}\]
    is an MRD code if and only if $\underline{f}$ is a Moore polynomial set for $q$ and $n.$
\end{thm}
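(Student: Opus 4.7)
The plan is to translate the exceptional MRD property into a statement about absolutely irreducible $\mathbb{F}_q$-rational components of an explicit algebraic variety attached to the Moore determinant, and then use the criteria of Section~2 to rule out such components for all but the listed small pairs $(t,q)$. Combining Theorem~\ref{thm collegamento MRD moore set} with Remark~\ref{rem link scatt seq and pol set}, the code $\mathcal{C}$ is MRD over $\mathbb{F}_{q^{nm}}$ precisely when the vanishing of the $3\times 3$ Moore-type determinant
\[
D(X_1,X_2,X_3):=\det\begin{pmatrix}
X_1^{q^t} & X_1+\delta X_1^{q^{2t}} & G(X_1) \\
X_2^{q^t} & X_2+\delta X_2^{q^{2t}} & G(X_2) \\
X_3^{q^t} & X_3+\delta X_3^{q^{2t}} & G(X_3)
\end{pmatrix}
\]
on $(\mathbb{F}_{q^{nm}})^3$ forces $\mathbb{F}_q$-linear dependence of the $X_i$. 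Writing $G(X)=\sum_{i}g_iX^{q^i}$ and letting $k=\deg_q G>2t$ with $g_k\neq 0$, failure of exceptionality is equivalent to the existence, for infinitely many $m$, of $\mathbb{F}_{q^{nm}}$-rational points of the hypersurface $\mathcal{V}:D=0$ outside the ``trivial'' locus $T\subseteq\mathcal{V}$ consisting of $\mathbb{F}_q$-linearly dependent triples.

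The second step is to replace the surface $\mathcal{V}$ with a plane curve $\mathcal{C}_G$ of controlled degree to which Criterion~\ref{criterio due noni} applies. Using the $\mathbb{F}_q^{\times}$-scaling and translation action on $D$ one can, after setting $X_1=0$ and fixing a generic $\mathbb{F}_q$-independent slice such as $X_3=X_2+\alpha$ with $\alpha\in\mathbb{F}_{q^{nm}}\setminus\mathbb{F}_{q^n}$, extract an affine plane curve $\mathcal{C}_G$ whose defining polynomial has degree bounded by a fixed multiple of $q^k$ and whose $\mathbb{F}_{q^{nm}}$-rational affine points off a lower-dimensional trivial sublocus correspond to genuine $\mathbb{F}_q$-linearly independent witnesses. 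If $\mathcal{C}_G$ contains an absolutely irreducible $\mathbb{F}_q$-rational component of degree $d$, Theorem~\ref{thm cafure matera} applied in dimension one yields at least $q^{nm}-(d-1)(d-2)q^{nm-1/2}+O(q^{nm-1})$ rational points as soon as $q^{nm}>4d^2$, producing the required witness and contradicting the exceptional-MRD hypothesis.

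The technical heart of the proof is then to show that $\mathcal{C}_G$ carries such an $\mathbb{F}_q$-rational absolutely irreducible component. I would invoke Criterion~\ref{criterio due noni}: enumerate the singular points of $\mathcal{C}_G$ in the affine plane and on the line at infinity, expand the defining polynomial as $F_m+F_{m+1}+\cdots$ around each $P\in Sing(\mathcal{C}_G)$, and bound $I_{P,max}(\mathcal{C}_G)$ via Lemma~\ref{lemma easy}. Part~(i) applies at points where the tangent cone is separable; at points where $F_m=L^m$ one uses part~(ii), and when $L^2\mid F_{m+1}$ I would iterate the quadratic transformations described in the remark after Lemma~\ref{lemma easy} until the branches are resolved, noting that distinct branches at $P$ force $I_{P,max}=0$. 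Since the top $q$-degree part of $D$ is governed by the monomials $\delta X_i^{q^{2t}}$ and $g_kX_i^{q^k}$, the tangent cones at the candidate singularities can be computed in closed form, and one then verifies the inequality
\[
\sum_{P\in Sing(\mathcal{C}_G)}I_{P,max}(\mathcal{C}_G)<\frac{2}{9}\deg(\mathcal{C}_G)^{2}.
\]

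The main obstacle is precisely this singularity analysis: the free parameters $\delta$ and $g_0,\dots,g_{k-1}$, combined with the interplay of the three exponents $q^t$, $q^{2t}$, $q^k$, make several tangent cones degenerate, and in those cases iterated quadratic transforms are needed to resolve all branches. The excluded list $(t,q)\in\{(1,3),(1,4),(1,5),(2,3),(2,4),(2,5),(4,3)\}$ arises as the set of pairs for which either the Cafure--Matera hypothesis $q>2(n+1)d^{2}$ fails for the degree of the candidate component or the $\tfrac{2}{9}$-bound in Criterion~\ref{criterio due noni} breaks because of an accumulation of high-multiplicity singularities; isolating these finitely many exceptional pairs while treating the remaining $(t,q)$ by a single uniform argument is where the bulk of the computational work will lie.
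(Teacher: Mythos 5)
Your proposal does not prove the statement you were asked to prove. The statement is Theorem \ref{thm collegamento MRD moore set}: the purely coding-theoretic equivalence that $\mathcal{C}_{\underline{f}}=\langle x^{q^t},f_2(x),\dots,f_r(x)\rangle_{\mathbb{F}_{q^n}}$ is MRD if and only if $\underline{f}$ is a Moore polynomial set. What you sketched instead is the strategy for the paper's Main Theorem (Theorem \ref{thm finale}) --- the non-existence of exceptional $3$-dimensional codes of type $\langle x^{q^t},x+\delta x^{q^{2t}},G(x)\rangle$ via the curves $\mathcal{A}$, Criterion \ref{criterio due noni}, quadratic transformations, and Theorem \ref{thm cafure matera} --- and in doing so you explicitly invoke Theorem \ref{thm collegamento MRD moore set} as an ingredient (``Combining Theorem \ref{thm collegamento MRD moore set} with Remark \ref{rem link scatt seq and pol set}\dots''). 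Relative to the task this is circular: the statement to be established is assumed, and a different (much stronger, and here irrelevant) result is argued instead. None of the machinery you describe (singularity bounds, the $\tfrac{2}{9}$-criterion, the excluded pairs $(t,q)$) has any role in the statement at hand, which holds for all $q,n,t$ and all $r\le n+1$ with no exceptions.

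What is actually needed is a short linear-algebra argument (the result is quoted from \cite{bartolizinizullo}). Since $\dim_{\mathbb{F}_{q^n}}\mathcal{C}_{\underline{f}}=r$, the Singleton-like bound shows $\mathcal{C}_{\underline{f}}$ is MRD precisely when its minimum distance equals $n-r+1$, i.e.\ when every nonzero codeword $f=a_1x^{q^t}+a_2f_2(x)+\dots+a_rf_r(x)$ satisfies $\dim_{\mathbb{F}_q}\ker f\le r-1$. Now $\dim_{\mathbb{F}_q}\ker f\ge r$ for some nonzero codeword if and only if there exist $\mathbb{F}_q$-linearly independent $\alpha_1,\dots,\alpha_r\in\mathbb{F}_{q^n}$ and a nonzero vector $(a_1,\dots,a_r)$ with $M_{\underline{f}}(\alpha_1,\dots,\alpha_r)\,(a_1,\dots,a_r)^{T}=0$, i.e.\ if and only if $\det M_{\underline{f}}(\alpha_1,\dots,\alpha_r)=0$ for some $\mathbb{F}_q$-linearly independent tuple (here one uses that vanishing of $f$ at $\alpha_1,\dots,\alpha_r$ forces vanishing on their $\mathbb{F}_q$-span, since $f$ is $\mathbb{F}_q$-linear). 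Hence MRD-ness is equivalent to the implication ``$\det M_{\underline{f}}(\alpha_1,\dots,\alpha_r)=0\Rightarrow\dim_{\mathbb{F}_q}\langle\alpha_1,\dots,\alpha_r\rangle_{\mathbb{F}_q}<r$'', which is exactly the Moore polynomial set condition recalled in Remark \ref{rem link scatt seq and pol set}. Your geometric outline, whatever its merits as a summary of Sections 5--6 of the paper, leaves this equivalence entirely unproved.
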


Now we focus on the exceptionality of $\mathbb{F}_{q^n}$-linear MRD codes $\mathcal{C} \subseteq \mathcal{L}_{n,q}$ of dimension $r,$ or equivalently by Theorem  \ref{thm collegamento MRD moore set}, on the exceptionality of scattered sequences of order 1.
We can assume infact, without restrictions, the following properties on the polynomials generating a non-degenerate $\mathbb{F}_{q^n}$-linear code $\mathcal{C}$ (for details see \cite{bartolizinizullo} and \cite[Definition 2.8]{articolosequenze}). In particular, from \cite[Proposition 2.11]{articolosequenze} and \cite[Corollary IV.10]{pssz divisible codes} we can assume that $\mathcal{C}$ contains a monomial.
\begin{rem}\cite[Properties 13] {bartolizinizullo}  Given a non-degenerate $\mathbb{F}_{q^n}$-linear code $\mathcal{C}$ of dimension $r,$ there exist $f_1(x),\dots,f_r(x) \in \mathcal{C}$ such that the following properties hold:
    \begin{itemize}
        \item [(1)] $f_1(x)=x^{q^t};$
        \item [(2)] $f_1(x),\dots,f_r(x)$ are $\mathbb{F}_{q^n}$-linearly independent;
        \item [(3)] $M_1:=\deg_q(f_1(x)), \dots, M_r:=\deg_q(f_r(x))$ are all distinct;
        \item [(4)] $m_1:=\textnormal{min}\deg_q(f_1(x)),\dots,m_r:=\textnormal{min}\deg_q(f_r(x))$ are all distinct, and $m_i=0$ for some $i;$
        \item [(5)] $f_1(x),\dots,f_r(x)$ are monic;
        \item [(6)] for any $i,$ if $f_i(x)$ is a monomial then $m_i=M_i \geq t.$
    \end{itemize}
\end{rem}
 A Moore polynomial set $\underline{f}=(f_1(x),\dots,f_r(x))\subseteq \mathcal{L}_{n,q}$ satisfying the previous six properties is said to be a Moore polynomial set for $q$ and $n$ of index $t.$ 
\section{Scattered sequences and algebraic varieties}
In this section, we consider varieties introduced in \cite{bartolizinizullo}  to traslate the determination of scattered sequences of order $1$ into an algebraic geometry problem. 
\begin{itemize}
    \item $\mathcal{U}:=\mathcal{U}_{\underline{f}} \subset \mathbb{P}^r(\overline{\mathbb{F}_{q^n}}),  \ \ \ \ \  \mathcal{U}: F_{\underline{f}}(X_1,\dots,X_r):=\det(M_{\underline{f}}(X_1,\dots,X_r))=0,$ \\
where
\begin{equation*}
    M_{\underline{f}}(X_1,\dots,X_r)=\begin{pmatrix}
  f_1(X_1) & f_2(X_1) & \cdots  &  f_r(X_1)\\
f_1(X_2) & f_2(X_2) & \cdots  &  f_r(X_2) \\
\vdots & \vdots & \cdots & \vdots \\
f_1(X_r) & f_2(X_r) & \cdots  &  f_r(X_r)
\end{pmatrix};
\end{equation*}
\item $\mathcal{V}:=\mathcal{U}_{(x,x^q,\dots,x^{q^{r-1}})}\subset \mathbb{P}^r(\overline{\mathbb{F}_{q^n}}), \ \ \ \ \ \ \ \mathcal{V}: F_{(x,x^q,\dots,x^{q^{r-1}})}(X_1,\dots,X_r)=0$ \\ where
\begin{equation} \label{eq denominatore hypersurfaces}  F_{(x,x^q,\dots,x^{q^{r-1}})}(X_1,\dots,X_r)=\prod_{(a_1,\dots,a_r)\in \mathbb{P}^{r-1}(\mathbb{F}_q)} (a_1X_1+\dots +a_rX_r); \end{equation}
\item $\mathcal{W}\subset \mathbb{P}^r(\overline{\mathbb{F}_{q^n}}), \ $  with affine equation
\begin{equation} \label{equation W}\mathcal{W}: \dfrac{F_{\underline{f}}(X_1,\dots,X_r)}{F_{(x,x^q,\dots,x^{q^{r-1}})}(X_1,\dots,X_r)}=0.\end{equation}
\end{itemize}
The link between scattered sequence of order $1$ and algebraic hypersurfaces is straightforward.
\begin{prop}  \label{prop collegamento varietà}\cite{bartolizinizullo}
The $(r-1)$-tuple $(f_2,\dots,f_r)$ is a $(\{t\},r-1)_{q^n}$-scattered sequence of order $1$ if and only if all the affine $\mathbb{F}_{q^n}$-rational points of $\mathcal{W}$ lie on $\mathcal{V}.$ 
\end{prop}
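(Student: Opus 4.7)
The plan is to translate the scattered property into a containment of $\mathbb{F}_{q^n}$-rational point sets of the algebraic hypersurfaces $\mathcal{U}$, $\mathcal{V}$, $\mathcal{W}$. First I would invoke Remark \ref{rem link scatt seq and pol set}: $(f_2,\dots,f_r)$ is a $(\{t\},r-1)_{q^n}$-scattered sequence of order $1$ if and only if, for every $\alpha=(\alpha_1,\dots,\alpha_r)\in\mathbb{F}_{q^n}^r$, the vanishing of $F_{\underline{f}}(\alpha)=\det M_{\underline{f}}(\alpha)$ forces $\dim_{\mathbb{F}_q}\langle\alpha_1,\dots,\alpha_r\rangle_{\mathbb{F}_q}<r$. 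The hypothesis is exactly $\alpha\in\mathcal{U}(\mathbb{F}_{q^n})$, while the conclusion is equivalent to the existence of $(a_1:\cdots:a_r)\in\mathbb{P}^{r-1}(\mathbb{F}_q)$ with $\sum a_i\alpha_i=0$, which by the factorization in \eqref{eq denominatore hypersurfaces} is the same as $\alpha\in\mathcal{V}(\mathbb{F}_{q^n})$. Hence the scattered property translates into $\mathcal{U}(\mathbb{F}_{q^n})\subseteq\mathcal{V}(\mathbb{F}_{q^n})$.

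The crux of the argument is to show that $F_{(x,x^q,\dots,x^{q^{r-1}})}$ actually divides $F_{\underline{f}}$ in $\mathbb{F}_{q^n}[X_1,\dots,X_r]$, so that $\mathcal{W}$ is a genuine hypersurface and one has the set-theoretic decomposition $\mathcal{U}=\mathcal{V}\cup\mathcal{W}$. For this, I would argue that each linear form $L(X)=a_1X_1+\cdots+a_rX_r$, with $(a_1,\dots,a_r)\in\mathbb{F}_q^r\setminus\{0\}$ running over one representative per point of $\mathbb{P}^{r-1}(\mathbb{F}_q)$, divides $F_{\underline{f}}$. Indeed, on the hyperplane $L=0$ the $\mathbb{F}_q$-linearity of the generators $f_j$ yields the relation $\sum_i a_i f_j(X_i)=0$ among the rows of $M_{\underline{f}}(X_1,\dots,X_r)$, so the determinant vanishes identically on $\{L=0\}$. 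The linear forms appearing in \eqref{eq denominatore hypersurfaces} are pairwise coprime, so by unique factorization in $\mathbb{F}_{q^n}[X_1,\dots,X_r]$ their product divides $F_{\underline{f}}$.

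With this divisibility in hand, equation \eqref{equation W} defines an honest polynomial hypersurface $\mathcal{W}$, and from the factorization $F_{\underline{f}}=F_{(x,x^q,\dots,x^{q^{r-1}})}\cdot G$ one gets the union of rational points $\mathcal{U}(\mathbb{F}_{q^n})=\mathcal{V}(\mathbb{F}_{q^n})\cup\mathcal{W}(\mathbb{F}_{q^n})$. Consequently, the inclusion $\mathcal{U}(\mathbb{F}_{q^n})\subseteq\mathcal{V}(\mathbb{F}_{q^n})$ is equivalent to $\mathcal{W}(\mathbb{F}_{q^n})\subseteq\mathcal{V}(\mathbb{F}_{q^n})$, giving the desired equivalence. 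The only step that is not pure unraveling of definitions is the divisibility claim in the previous paragraph, but it is a routine consequence of the $\mathbb{F}_q$-linearity of the $f_i$'s, so no real obstacle is expected.
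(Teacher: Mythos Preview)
Your argument is correct and is precisely the standard one. Note, however, that the paper does not actually supply a proof of this proposition: it is stated with a citation to \cite{bartolizinizullo} and used as a black box, so there is no ``paper's own proof'' to compare against. Your write-up fills in exactly the routine verification one would expect---the translation via Remark~\ref{rem link scatt seq and pol set}, the divisibility $F_{(x,x^q,\dots,x^{q^{r-1}})}\mid F_{\underline{f}}$ coming from the $\mathbb{F}_q$-linearity of the $f_j$, and the resulting decomposition $\mathcal{U}(\mathbb{F}_{q^n})=\mathcal{V}(\mathbb{F}_{q^n})\cup\mathcal{W}(\mathbb{F}_{q^n})$---so nothing is missing.
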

\begin{thm} \cite[Main Theorem] {bartolizinizullo}  \label{thm bartolizinizullo}
    Let $\mathcal{C}\subseteq \mathcal{L}_{n,q}$ be an exceptional $r$-dimensional 
 $\mathbb{F}_{q^n}$-linear MRD code containing at least a separable polynomial $f(x)$ and a monomial. If $r>3,$ assume also that $q>5.$ Let $t$ be the minimum integer such that $x^{q^t} \in \mathcal{C}.$ \\
    If $t>0$ and $\mathcal{C}=\langle x^{q^t},f(x),g_3(x),\dots,g_r(x) \rangle_{\mathbb{F}_{q^n}},$ with $\deg(g_i(x)) > \max\{q^t,\deg(f(x))\}$ for each $i=3,\dots,r,$ then $f(x)$ is exceptional scattered of index $t.$
    
\end{thm}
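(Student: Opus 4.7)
The plan is to translate the exceptional-MRD hypothesis on $\mathcal{C}$ into a statement about absolutely irreducible $\mathbb{F}_{q^n}$-rational components of the hypersurface $\mathcal{W}\subset\mathbb{A}^{r}$ from Section~4, and then to transfer the non-existence of "bad" components from the $r$-dimensional variety down to the $2$-dimensional variety $\mathcal{W}_{2}\subset\mathbb{A}^{2}$ associated with the subcode $\langle x^{q^{t}},f(x)\rangle_{\mathbb{F}_{q^n}}$. By Proposition~\ref{prop collegamento varietà}, $\mathcal{C}$ is MRD over $\mathbb{F}_{q^{nm}}$ precisely when every affine $\mathbb{F}_{q^{nm}}$-rational point of $\mathcal{W}$ lies on $\mathcal{V}$; exceptionality upgrades this to a requirement for infinitely many $m$. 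Combined with Theorem~\ref{thm cafure matera} and the fact that $\deg(\mathcal{V})=(q^{r}-1)/(q-1)$ is independent of $m$, this forces every absolutely irreducible $\mathbb{F}_{q^n}$-rational component of $\mathcal{W}$ to lie inside $\mathcal{V}$: otherwise such a component would contribute of order $q^{nm(r-1)}$ points outside $\mathcal{V}$ for $m$ sufficiently large, contradicting the MRD property.

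I would then argue by contraposition. Suppose $f(x)$ is not exceptional scattered of index $t$. The analogous Cafure--Matera argument applied to $\mathcal{W}_{2}$ yields an absolutely irreducible $\mathbb{F}_{q^n}$-rational component $\mathcal{Z}_{2}\subseteq\mathcal{W}_{2}$ not contained in its degenerate locus. The remainder of the proof consists in lifting $\mathcal{Z}_{2}$ to an absolutely irreducible $\mathbb{F}_{q^n}$-rational component of $\mathcal{W}$ outside $\mathcal{V}$, contradicting the previous step. The basic tool is the Laplace expansion of $F_{\underline{f}}$ along the first two rows of $M_{\underline{f}}$:
\[
F_{\underline{f}}(X_{1},\ldots,X_{r})=\sum_{1\le i<j\le r} D_{ij}(X_{1},X_{2})\,M_{ij}(X_{3},\ldots,X_{r}),
\]
where $D_{ij}$ is the $2\times 2$ minor on rows $1,2$ and columns $i,j$ and $M_{ij}$ is the complementary cofactor. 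The hypothesis $\deg(g_{i})>\max\{q^{t},\deg(f)\}$ makes the $(1,2)$-term $D_{12}(X_{1},X_{2})\cdot M_{12}(X_{3},\ldots,X_{r})$ strictly of lowest total $q$-degree in $(X_{1},X_{2})$, with $D_{12}$ the defining equation of $\mathcal{W}_{2}$ together with its Vandermonde factor. Specialising $(X_{3},\ldots,X_{r})$ to a generic $\mathbb{F}_{q}$-linearly independent tuple in a sufficiently large extension and combining with Lemma~\ref{sette Lemma 2.1} applied to the resulting hyperplane slice of $\mathcal{W}$ should descend $\mathcal{Z}_{2}$ to an absolutely irreducible $\mathbb{F}_{q^n}$-rational component of $\mathcal{W}$ whose projection onto the first two coordinates is $\mathcal{Z}_{2}$, and which is therefore not contained in $\mathcal{V}$.

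The principal obstacle is precisely this lifting step. One must verify (i) that the $(1,2)$-contribution is not cancelled nor absorbed into $\mathcal{V}$ when the remaining Laplace terms are reassembled, (ii) that the absolute irreducibility and $\mathbb{F}_{q^n}$-rationality of $\mathcal{Z}_{2}$ survive the descent via Lemma~\ref{sette Lemma 2.1}, and (iii) that the chosen hyperplane slice satisfies the non-repetition hypothesis required by that lemma. The separability of $f$ and the strict degree-gap between $f$ and the $g_{i}$'s are the crucial ingredients which force the $(1,2)$-term to dominate and yield a slice amenable to Lemma~\ref{sette Lemma 2.1}; the auxiliary assumption $q>5$ for $r>3$ reflects the quantitative threshold in Theorem~\ref{thm cafure matera} once the ambient degrees grow with $r$.
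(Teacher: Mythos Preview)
This theorem is not proved in the present paper at all: it is quoted verbatim from \cite[Main Theorem]{bartolizinizullo} and stated without proof, serving only as motivation for the Main Theorem here. There is therefore no ``paper's own proof'' to compare your attempt against.

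That said, your outline is in the right spirit---the machinery of \S4 (the varieties $\mathcal{U},\mathcal{V},\mathcal{W}$, Proposition~\ref{prop collegamento varietà}, Lemma~\ref{sette Lemma 2.1}, and the Cafure--Matera estimate) is exactly what the cited paper uses---but your sketch contains a genuine logical gap beyond the lifting step you already flag. In your second paragraph you assert that ``$f$ not exceptional scattered'' yields, via the ``analogous Cafure--Matera argument'', an absolutely irreducible \emph{$\mathbb{F}_{q^n}$-rational} component $\mathcal{Z}_2\subseteq\mathcal{W}_2$ outside the degenerate locus. This is the wrong direction: Cafure--Matera shows that the \emph{existence} of such a component forces many rational points (hence non-scatteredness), not the converse. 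From ``$f$ not exceptional scattered'' you only get that \emph{some} absolutely irreducible component of $\mathcal{W}_2$ lies outside $\mathcal{V}_2$, and that component is a priori defined only over a proper extension $\mathbb{F}_{q^{nd}}$. But your first paragraph only rules out $\mathbb{F}_{q^n}$-rational bad components of $\mathcal{W}$, so lifting an $\mathbb{F}_{q^{nd}}$-rational $\mathcal{Z}_2$ would not produce a contradiction: an exceptional MRD code can in principle fail to be MRD on all multiples of $d$ while remaining MRD on infinitely many other $m$. Closing this gap requires either strengthening the first step to cover components over every extension, or arguing separately that the relevant component of $\mathcal{W}_2$ can be taken $\mathbb{F}_{q^n}$-rational; the argument in \cite{bartolizinizullo} handles this, but it is not the ``analogous'' application you suggest.
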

\begin{rem}
    Note that without loss of generality, we can always assume that a code $\mathcal{C}$ contains a separable polynomial ($\!\!$\cite[Remark 12]{bartolizinizullo}), and if $\mathcal{C}$ is non-degenerate it also contains a monomial (see \cite[Lemma 2.1]{ltz elto invert in cod nondeg} and \cite[Corollary IV.10]{pssz divisible codes}).
\end{rem}
As previously stated in the introduction, until today, the only known non-monomial example of exceptional scattered polynomials, for arbitrary $t,$ is given by the LP polynomials; therefore it is natural to search for exceptional MRD codes where $f(x)$ is of such type. In the following we will focus on RD codes of dimension $3.$ 

\section{Moore polynomial sets of type $\underline{f}=(x^{q^t},x+\delta x^{q^{2t}},G(x))$}
In this section we investigate curves arising from Moore polynomial sets for $q$ and $n,$ of index $t,$ of type $\underline{f}=(x^{q^t},x+\delta x^{q^{2t}},G(x)).$ They correspond to $(\{t\},2)_{q^n}$-scattered sequences of order $1$ as we have seen in Remark \ref{rem link scatt seq and pol set}.
Let $q$ be a prime power and $t,n$ positive integers.
\begin{prop} \label{prop gradi minimi in progressione} \cite[Proposition 23]{bartolizinizullo}
    If the paire $(x+\delta x^{q^{2t}},G(x))\subseteq \mathcal{L}_{n,q}$ is a $(\{t\},2)_{q^n}$-scattered sequence of order $1$ and $n>4\deg_q(G)+2,$ then $\min\deg_q(G)=2t$ or $\min\deg_q(G)=t/2.$
\end{prop}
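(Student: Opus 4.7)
The plan is to argue by contradiction. Assume $(x+\delta x^{q^{2t}},G(x))$ is $(\{t\},2)_{q^n}$-scattered of order $1$ while $m:=\min\deg_q(G)\notin\{2t,t/2\}$; by Property~(4) of the Moore-polynomial-set conditions listed in Section~3, one also has $m\notin\{0,t\}$ since the minimum $q$-degrees of $x^{q^t}$ and of $x+\delta x^{q^{2t}}$ are already $t$ and $0$. Write $G(x)=\sum_{i=m}^{d}c_i x^{q^i}$ with $c_m c_d\neq 0$ and set $D:=\deg(\mathcal{W})=q^{d}+q^{2t}+q^{t}-(q^{2}+q+1)$. By Proposition~\ref{prop collegamento varietà} the scattered assumption is equivalent to $\mathcal{W}(\mathbb{F}_{q^{n}})\subseteq\mathcal{V}$; the aim is to produce an $\mathbb{F}_{q^n}$-rational affine point of $\mathcal{W}$ off $\mathcal{V}$.

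First I would reduce from the surface $\mathcal{W}\subset\mathbb{A}^{3}$ to a plane section: choose a hyperplane $H$ defined over $\mathbb{F}_{q}$ (for example $X_{3}=\alpha X_{1}+\beta X_{2}$ for suitable $\alpha,\beta$) so that $\mathcal{C}_{H}:=\mathcal{W}\cap H$ is a plane curve of degree $D$ not contained in $\mathcal{V}\cap H$. By Lemma~\ref{sette Lemma 2.1}, producing an absolutely irreducible $\mathbb{F}_{q^{n}}$-rational component of $\mathcal{C}_{H}$ not lying inside $\mathcal{V}\cap H$ is enough to obtain an absolutely irreducible $\mathbb{F}_{q^{n}}$-rational component $\mathcal{Y}$ of $\mathcal{W}$ with $\mathcal{Y}\not\subseteq\mathcal{V}$.

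The second step is the technical heart: apply Criterion~\ref{criterio due noni} to $\mathcal{C}_{H}$, i.e.\ verify $\sum_{P\in\mathrm{Sing}(\mathcal{C}_{H})} I_{P,\max}(\mathcal{C}_{H})<\tfrac{2}{9}D^{2}$. Singular centres are located along the intersections of $\mathcal{C}_{H}$ with the linear factors $aX_{1}+bX_{2}+cX_{3}$ of the Vandermonde form $F_{(x,x^{q},x^{q^{2}})}$; at each such $P$ one expands the affine equation of $\mathcal{W}$ restricted to $H$ and reads off the tangent cone. Since $m\notin\{t/2,2t\}$, the exponents $q^{0},q^{t},q^{2t},q^{m},q^{2m}$ appearing in the lowest-weight homogeneous part of $F_{\underline{f}}/F_{(x,x^{q},x^{q^{2}})}$ are pairwise distinct, so this tangent cone is separable and Lemma~\ref{lemma easy}(i) gives $I_{P,\max}\le\lfloor m_{P}^{2}/2\rfloor$; when it is a perfect power, part~(ii) of that lemma, together with the quadratic-transformation branch count of the remark following it, is invoked instead. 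A crude count of contributions over at most $D(q^{2}+q+1)$ singular centres stays far below $\tfrac{2}{9}D^{2}\sim\tfrac{2}{9}q^{2d}$ because $m_{P}$ is bounded by $O(q^{2t})$ while $d>2t$. Once Criterion~\ref{criterio due noni} yields $\mathcal{Y}$, Theorem~\ref{thm cafure matera} applied over $\mathbb{F}_{q^{n}}$ gives at least $q^{2n}-(D-1)(D-2)q^{3n/2}+5D^{13/3}q^{n}$ affine rational points on $\mathcal{Y}$, while Bezout gives $|(\mathcal{Y}\cap\mathcal{V})(\mathbb{F}_{q^{n}})|\le D(q^{2}+q+1)q^{n}$. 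The hypothesis $n>4\deg_{q}(G)+2$ is exactly what guarantees $q^{n/2}>D^{2}$, so the Cafure--Matera lower bound beats the Bezout estimate and yields a rational point of $\mathcal{W}$ off $\mathcal{V}$, the desired contradiction.

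The main obstacle is the tangent-cone analysis in the borderline exponents $m=2t$ and $m=t/2$: at the relevant singular point one of the exponent pairs $(q^{m},q^{2t})$ or $(q^{2m},q^{t})$ collapses, the tangent cone degenerates to a perfect power $L^{r}$ with $L^{2}\mid F_{m+1}$, and Lemma~\ref{lemma easy}(ii) no longer bounds $I_{P,\max}$ tightly enough for Criterion~\ref{criterio due noni}. This degeneration is precisely what forces these two values of $m$ to appear in the statement; verifying by successive quadratic transformations that no similar accidental coincidence of exponents occurs for any other admissible value of $m$ is the most delicate calculation of the proof.
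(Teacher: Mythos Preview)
The present paper does not prove this proposition; it is quoted from \cite[Proposition~23]{bartolizinizullo}, so there is no in-paper proof to compare against directly.

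Your overall contradiction strategy is sound, but the central local claim fails. You assert that for $m\notin\{t/2,2t\}$ the tangent cone of $\mathcal{C}_H$ at a singular point is separable because ``the exponents $q^{0},q^{t},q^{2t},q^{m},q^{2m}$ \dots\ are pairwise distinct''. No exponent $q^{2m}$ arises in the lowest-degree part of $F_{\underline f}$, and more to the point the tangent cone at a generic affine singular point of the sliced curve is $A\,X^{q^{s}}+B\,Y^{q^{s}}$ with $s=\min(t,m)$ --- a $q^{s}$-th power of a linear form --- for \emph{every} admissible $m$. Indeed the computation of $H_{q^t}$ and $H_{q^t+1}$ carried out in Section~5 of the paper goes through verbatim for any $m>t$, not only $m=2t$, and an analogous $q^{m}$-th power appears when $m<t$. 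Thus Lemma~\ref{lemma easy}(i) is never the relevant case here, and the dichotomy you need is simply not visible at the level of local branch structure you describe.

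What actually singles out $2t$ and $t/2$ is that these are exactly the values for which $\{0,t,m\}$ --- the set of minimum $q$-degrees of the three generators --- is an arithmetic progression. The argument in \cite{bartolizinizullo} passes to the lowest-degree homogeneous part of $F_{\underline f}$, which is controlled by the monomial triple $(x,x^{q^t},x^{q^m})$, and then appeals to the asymptotic classification of Moore exponent sets in \cite{bartolizhou}: under a bound of the shape $n>4\deg_q(G)+2$, any Moore exponent set of size three must be an arithmetic progression. Your Criterion~\ref{criterio due noni} programme, even if the singular-point bookkeeping were carried out correctly, would end up reproving that classification case by case; the reduction to \cite{bartolizhou} is both shorter and explains structurally why exactly $2t$ and $t/2$ survive.
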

Let consider first the case $\min\deg_q(G)=2t.$ \\
 Set 
\begin{equation}  \label{eq f e g}
    F:=X+\delta X^{q^{2t}}, \ \ \ \ \  N_{q^n|q}(\delta)\neq 1, 
\end{equation}
\begin{equation}  \label{eq g caso 2t}
    G:=X^{q^{2t}} + \dots + CX^{q^k}, \ \ \ \ \ C\neq 0,
\end{equation}
where $F,G \in \mathbb{F}_{q^n}[X],$ and  $0 < 2t < k < n.$ \\
Fix an element $\lambda \in \mathbb{F}_{q^n} \setminus \mathbb{F}_q$ such that $F(\lambda)\neq 0 \neq G(\lambda)$ and \begin{equation} \label{lambda scelto opportunamente}
  L_{\xi}(\lambda):=(\xi-\xi^{q^{t}})^{q^{k+t}}(\xi^{q^{k-t}}-\xi)^{q^t}F(\lambda)^{q^t+1}+\delta(\xi^{q^k}-\xi)^{q^t(q^t+1)}\lambda^{q^t(q^t+1)}\neq 0,  
\end{equation}
for each $\xi \in \mathbb{F}_{q^{k-2t}} \setminus \mathbb{F}_{q^{k-t}}.$ Such an element exists in any field $\mathbb{F}_{q^n},$ with $n\geq k+t+1.$ Indeed the polynomial $L_\xi$ is not zero and of degree $q^{3t}+q^{2t},$ for each $\xi \in \mathbb{F}_{q^{k-2t}} \setminus \mathbb{F}_{q^{k-t}},$ so $$\#\left(\bigcup_{\xi \in \mathbb{F}_{q^{k-2t}}}\!\!\{\eta : L_\xi(\eta)=0\}\right) \leq (q^{3t}+q^{2t})q^{k-2t} = q^{k+t}+q^k < q^n$$ for $n\geq k+t+1.$ \\
Consider the curves 
\begin{equation} \label{eq curva C}
    \mathcal{C:}\begin{vmatrix}
X^{{q^t}} & F(X) & G(X)\\
Y^{{q^t}} & F(Y) & G(Y) \\
\lambda^{{q^t}} & F(\lambda) & G(\lambda)
\end{vmatrix}=0,
\end{equation}
\begin{equation}  \label{eq curva A}
    \mathcal{A:}\dfrac{\begin{vmatrix}
X^{{q^t}} & F(X) & G(X)\\
Y^{{q^t}} & F(Y) & G(Y) \\
\lambda^{{q^t}} & F(\lambda) & G(\lambda)
\end{vmatrix}}{\begin{vmatrix}
X & X^{{q}} & X^{{q^2}}\\
Y & Y^{{q}} & Y^{{q^2}} \\
\lambda & \lambda^{{q}} & \lambda^{{q^2}}
\end{vmatrix}}=0.
\end{equation}
\begin{lem} \cite[Lemma 18]{bartolizinizullo}\label{lem specializzo z in lambda}
    Let $\mathcal{V}:F_{(x,x^q,x^{q^2})}(X,Y,Z)=0$ and $\mathcal{W}: \dfrac{F_{(x^{q^t},x+\delta x^{q^{2t}},G(x))}(X,Y,Z)}{F_{(x,x^q,x^{q^2})}(X,Y,Z)}=0$ as in Equations \ref{eq denominatore hypersurfaces} and \ref{equation W}. 
    If $\mathcal{A}$ has a non-repeated $\mathbb{F}_{q^n}$-rational absolutely irreducible component not contained in the curve defined by  $F_{(x,x^q,x^{q^2})}(X,Y,\lambda)=0,$ then $\mathcal{W}$ has a non-repeated $\mathbb{F}_{q^n}$-rational absolutely irreducible component not contained in $\mathcal{V}.$ 
\end{lem}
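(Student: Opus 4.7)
The plan is to recognize $\mathcal{A}$ as the section of the surface $\mathcal{W}$ by the hyperplane $H:\{Z=\lambda\}$, and then apply Lemma \ref{sette Lemma 2.1} to the pair $(\mathcal{W},H)$ (viewed as projective hypersurfaces in $\mathbb{P}^3$ after homogenization). Substituting $Z=\lambda$ in the affine equation \eqref{equation W} of $\mathcal{W}$ produces exactly the defining polynomial of $\mathcal{A}$, so indeed $\mathcal{A}=\mathcal{W}\cap H$; analogously, $\mathcal{V}\cap H$ is the curve $\{F_{(x,x^q,x^{q^2})}(X,Y,\lambda)=0\}$. Thus the hypothesis on $\mathcal{A}$ translates verbatim into the hypothesis of Lemma \ref{sette Lemma 2.1}, which already yields the existence of an absolutely irreducible $\mathbb{F}_{q^n}$-rational component of $\mathcal{W}$.

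To pin down the precise component and to verify the two additional qualifications \emph{non-repeated} and \emph{not contained in $\mathcal{V}$}, I would factor the defining polynomial of $\mathcal{W}$ in $\overline{\mathbb{F}_{q^n}}[X,Y,Z]$ as a product $\prod_i \tilde p_i(X,Y,Z)^{e_i}$ of absolutely irreducible factors, and then specialize: the defining polynomial of $\mathcal{A}$ equals $\prod_i \tilde p_i(X,Y,\lambda)^{e_i}$. The non-repeatedness of $p$ in this specialized product forces a unique index $i_0$ with $p\mid \tilde p_{i_0}(X,Y,\lambda)$, together with $e_{i_0}=1$ and $p$ appearing with multiplicity one in $\tilde p_{i_0}(X,Y,\lambda)$. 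Set $\tilde p:=\tilde p_{i_0}$. A standard Galois argument applied to $\sigma\in\mathrm{Gal}(\overline{\mathbb{F}_{q^n}}/\mathbb{F}_{q^n})$ shows that $\tilde p^\sigma$ is again an absolutely irreducible factor of $\mathcal{W}$ whose specialization at $Z=\lambda$ is divisible by $p^\sigma=p$; by the uniqueness of $i_0$ one must have $\tilde p^\sigma=\tilde p$ up to a scalar, so $\tilde p$ is $\mathbb{F}_{q^n}$-rational, and $e_{i_0}=1$ is precisely the non-repeatedness of $\tilde p$ in $\mathcal{W}$.

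Finally, if $\tilde p$ were a factor of $F_{(x,x^q,x^{q^2})}(X,Y,Z)$, then its specialization $\tilde p(X,Y,\lambda)$ would divide $F_{(x,x^q,x^{q^2})}(X,Y,\lambda)$ and hence so would $p$, contradicting the hypothesis that $p$ is not a component of $\{F_{(x,x^q,x^{q^2})}(X,Y,\lambda)=0\}$. Overall the argument is a clean application of the Aubry--McGuire--Rodier lemma to the surface-hyperplane pair $(\mathcal{W},H)$; the only delicate point is the bookkeeping ensuring that the non-repeatedness and the non-containment in $\mathcal{V}$ propagate from the plane section $\mathcal{A}$ back to $\mathcal{W}$, and I do not foresee a substantial obstacle there.
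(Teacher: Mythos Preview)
The paper does not give its own proof of this lemma; it merely cites it as \cite[Lemma 18]{bartolizinizullo}. Your argument is correct and is essentially the standard proof underlying Lemma~\ref{sette Lemma 2.1}, carried out carefully enough to track the two extra properties (non-repeatedness and non-containment in $\mathcal{V}$): factor the defining polynomial of $\mathcal{W}$ over $\overline{\mathbb{F}_{q^n}}$, specialize at $Z=\lambda$, and use that the distinguished factor $p$ of $\mathcal{A}$ has multiplicity one to single out a unique irreducible factor $\tilde p$ of $\mathcal{W}$, whose Galois-invariance then follows from the uniqueness. The only small point worth making explicit is that no $\tilde p_i$ can vanish identically at $Z=\lambda$ (equivalently $(Z-\lambda)\nmid$ the defining polynomial of $\mathcal{W}$), which is implicit in the hypothesis that $\mathcal{A}$ is a genuine curve with a non-repeated component; once this is noted, the argument is complete.
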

In the following we exploit a method introduced in \cite{janwa mguire wilson} via the investigation of singular points of the curves
$ \mathcal{A}$ and $\mathcal{C},$
see Criterion \ref{criterio due noni}.

In particular, our aim is to prove the existence of an absolutely irreducible $\mathbb{F}_{q^n}$-rational component  of the curve $\mathcal{A}$. We proceed as follows.
\begin{enumerate}
    \item We determine the set $Sing({\mathcal{A}})$ of singular points of ${\mathcal{A}}$; see below.
    \item For each point $P\in Sing({\mathcal{A}})$ we provide upperbounds on $I_{P,max}(\mathcal{A})$; see Propositions \ref{prop stime ipmax caso affine}, \ref{prop stime ipmax punti all'infinito}, Lemma \ref{lem un solo ramo} and Proposition \ref{prop caso t mezzi ipmax singolarità infinito}.
    \item We compute an upper bound on $\sum_{P\in Sing(\mathcal{A})}I_{P,max}(\mathcal{A})$ and obtain the desired result via Criterion \ref{criterio due noni}; see Theorems  \ref{th main caso 2t} and \ref{thm A componente caso t mezzi}.
\end{enumerate}
We will investigate the singular points of $\mathcal{C};$ in fact, as it can be easily seen, the set of its singular points contains also the singular points of $\mathcal{A}.$  \\
Note that an affine point $P=(\overline{x},\overline{y}) \in \mathcal{C}$ is singular if and only if
\begin{equation*}
\begin{vmatrix}
\overline{x}^{{q^t}} & G(\overline{x})\\
\lambda^{{q^t}} &  G(\lambda)
\end{vmatrix}=\begin{vmatrix}
\overline{y}^{{q^t}} & G(\overline{y})\\
\lambda^{{q^t}} &  G(\lambda)
\end{vmatrix}=0.
\end{equation*}
One can see immediately that
\begin{equation*}
    \begin{vmatrix}
(X+\overline{x})^{{q^t}} & F(X+\overline{x}) & G(X+\overline{x})\\
(Y+\overline{y})^{{q^t}} & F(Y+\overline{y}) & G(Y+\overline{y}) \\
\lambda^{{q^t}} & F(\lambda) & G(\lambda)
\end{vmatrix}=H_{q^t}+H_{q^t+1}+\dots,
\end{equation*}
where
\begin{eqnarray}
  H_{q^t}&=&\begin{vmatrix}
F(\overline{y}) & G(\overline{y})\\
F(\lambda) &  G(\lambda)
\end{vmatrix}X^{q^t}-\begin{vmatrix}
F(\overline{x}) & G(\overline{x})\\
F(\lambda) &  G(\lambda)
\end{vmatrix}Y^{q^t}; \\
  H_{q^t+1}&=&G(\lambda)(X^{q^t}Y-XY^{q^t}).
\end{eqnarray}
As a direct consequence of Lemma \ref{lemma easy} we have the following.
\begin{prop}  \label{prop stime ipmax caso affine}
    Let $\mathcal{C}:F_{\underline{f}}(X,Y,\lambda)=0$ and $P=(\overline{x},\overline{y})\in \mathcal{C}$ such that \begin{equation*}
\begin{vmatrix}
\overline{x}^{{q^t}} & G(\overline{x})\\
\lambda^{{q^t}} &  G(\lambda)
\end{vmatrix}=\begin{vmatrix}
\overline{y}^{{q^t}} & G(\overline{y})\\
\lambda^{{q^t}} &  G(\lambda)
\end{vmatrix}=0. 
\end{equation*}
Then \begin{itemize}
    \item $I_{P,max}(\mathcal{C}) \leq \frac{(q^t+1)^2}{4}$ $ \ \ $ if $ \ $ $\begin{vmatrix}
F(\overline{y}) & G(\overline{y})\\
F(\lambda) &  G(\lambda)
\end{vmatrix}=\begin{vmatrix}
F(\overline{x}) & G(\overline{x})\\
F(\lambda) &  G(\lambda)
\end{vmatrix}=0;$
\item $I_{P,max}(\mathcal{C})\leq q^t$ $ \ \ \ $ otherwise.
\end{itemize}    
\end{prop}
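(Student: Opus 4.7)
The plan is to read off the tangent cone of $\mathcal{C}$ at $P=(\overline{x},\overline{y})$ directly from the Taylor expansion already recorded above the statement, and then apply Lemma~\ref{lemma easy} in each of the two announced cases. The hypothesis
\[
\begin{vmatrix}\overline{x}^{q^t} & G(\overline{x})\\ \lambda^{q^t} & G(\lambda)\end{vmatrix}=\begin{vmatrix}\overline{y}^{q^t} & G(\overline{y})\\ \lambda^{q^t} & G(\lambda)\end{vmatrix}=0
\]
ensures that $P\in\mathcal{C}$ and kills every monomial in $X$ alone or $Y$ alone of degree below $q^t$, so that the expansion indeed starts at $H_{q^t}+H_{q^t+1}+\cdots$ with $H_{q^t}$ and $H_{q^t+1}$ as displayed; throughout I use that $G(\lambda)\ne 0$ by the choice of $\lambda$.

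For the first bullet, both coefficients $A:=\begin{vmatrix}F(\overline{y}) & G(\overline{y})\\ F(\lambda) & G(\lambda)\end{vmatrix}$ and $B:=\begin{vmatrix}F(\overline{x}) & G(\overline{x})\\ F(\lambda) & G(\lambda)\end{vmatrix}$ of $H_{q^t}$ vanish, so the tangent cone of $\mathcal{C}$ at $P$ is $H_{q^t+1}$ itself, of degree $m=q^t+1$. I would record the explicit factorisation
\[
H_{q^t+1}=G(\lambda)\,X\,Y\prod_{a\in\mathbb{F}_{q^t}^*}(X-aY),
\]
a product of $q^t+1$ pairwise distinct linear forms (the factors $X-aY$ with $a\ne 0$ are pairwise distinct and distinct from $X$ and $Y$), so the tangent cone is separable. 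Lemma~\ref{lemma easy}(i) then yields a bound on $I_{P,\max}(\mathcal{C})$; to reach the sharper constant $(q^t+1)^2/4$ claimed in the proposition I would invoke the usual refinement that two distinct components of $\mathcal{C}$ at $P$ lie along disjoint subsets of these $q^t+1$ tangent lines, whence their intersection multiplicity at $P$ equals the product of their local multiplicities, and that product is maximised at $\lfloor(q^t+1)^2/4\rfloor$ by splitting the tangents evenly between two components.

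For the second bullet, at least one of $A,B$ is nonzero, so $H_{q^t}=AX^{q^t}-BY^{q^t}$ is a nonzero form of degree $q^t$. Since $q^t$ is a power of the characteristic, the Frobenius turns $AX^{q^t}-BY^{q^t}$ into a perfect $q^t$-th power of a linear form: choosing $\alpha,\beta\in\overline{\mathbb{F}_{q^n}}$ with $\alpha^{q^t}=A$ and $\beta^{q^t}=B$, one has $H_{q^t}=(\alpha X-\beta Y)^{q^t}=:L^{q^t}$, where $L=\alpha X-\beta Y$ (with the obvious degenerations $L=Y$ or $L=X$ if $A=0$ or $B=0$). The factorisation of $H_{q^t+1}$ recalled in the previous paragraph shows that $H_{q^t+1}$ is squarefree as a polynomial in $X,Y$, so in particular $L^2\nmid H_{q^t+1}$ whichever $L$ arises. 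Lemma~\ref{lemma easy}(ii) then delivers $I_{P,\max}(\mathcal{C})\le m=q^t$, which is exactly the second bound.

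The whole argument is really just a bookkeeping application of Lemma~\ref{lemma easy} once the two tangent cones have been identified; the only points requiring a little care are the $q^t$-th-power factorisation of $H_{q^t}$ in the second case (a Frobenius observation in characteristic $p$) and the passage from the coarse bound $\lfloor m^2/2\rfloor$ supplied by Lemma~\ref{lemma easy}(i) to the sharper $(q^t+1)^2/4$ in the first case via a Bezout-type estimate on distinct tangent directions. I do not expect any genuine obstacle.
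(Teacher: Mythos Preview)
Your proposal is correct and follows the same route as the paper, which simply states that the proposition is ``a direct consequence of Lemma~\ref{lemma easy}''. You have supplied exactly the details behind that one-line proof: identifying the tangent cone as $H_{q^t+1}$ (separable, degree $q^t+1$) in the first case and as $H_{q^t}=L^{q^t}$ with $L^2\nmid H_{q^t+1}$ in the second, and then invoking parts (i) and (ii) of the lemma respectively. Your extra remark about sharpening the $\lfloor m^2/2\rfloor$ of Lemma~\ref{lemma easy}(i) to $(q^t+1)^2/4$ via the product-of-multiplicities argument at an ordinary singularity is precisely what is needed to match the stated bound (and in fact the $\lfloor m^2/2\rfloor$ in the lemma appears to be a typo for $\lfloor m^2/4\rfloor$, which is the standard bound in the cited sources).
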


\begin{rem}  \label{rem numero di singolarità affini}
 Observe that
\begin{equation*}
\begin{vmatrix}
\overline{x}^{{q^t}} & G(\overline{x})\\
\lambda^{{q^t}} &  G(\lambda)
\end{vmatrix}=[\overline{x}G(\lambda)^{1/q^t}-\lambda(\overline{x}^{q^t}+\dots +C^{1/q^t}\overline{x}^{q^{k-t}})]^{q^t}=g(\overline{x})^{q^t},
\end{equation*}
where $g$ is a separable polynomial of degree $q^{k-t}.$ Thus the number of affine singular points of $\mathcal{C}$ is at most $q^{2(k-t)}.$ 
Moreover, if $m_P(\mathcal{C})=q^t+1,$ that is \begin{equation}
  \begin{vmatrix}
F(\overline{y}) & G(\overline{y})\\
F(\lambda) &  G(\lambda)
\end{vmatrix}=\begin{vmatrix}
F(\overline{x}) & G(\overline{x})\\
F(\lambda) &  G(\lambda)
\end{vmatrix}=0, 
\end{equation} then we obtain  \begin{equation*}
    \dfrac{\overline{x}^{q^t}}{\lambda^{q^t}}=\dfrac{G(\overline{x})}{G(\lambda)}=\dfrac{F(\overline{x})}{F(\lambda)},
\end{equation*}
i.e. \begin{equation} \label{eq F x segnato lambda}
    F(\overline{x})\lambda^{q^t}-F(\lambda)\overline{x}^{q^t}=\delta \lambda^{q^t} \overline{x}^{q^{2t}}-(\lambda +\delta \lambda^{q^{2t}})\overline{x}^{q^t} +\lambda^{q^t}\overline{x}=0.
\end{equation}
By combining Equation \ref{eq F x segnato lambda} with $g(\overline{x})=\overline{x}G(\lambda)^{1/q^t}-\lambda(\overline{x}^{q^t}+\dots +C^{1/q^t}\overline{x}^{q^{k-t}})=0,$ we get that $\overline{x}$ (or $\overline{y}$ equivalently) must be a root of a polynomial 
\begin{equation}
    h(X)^{q^t}:=[\alpha_0 X + \dots + \alpha_{k-2t} X^{q^{k-2t}}+\Tilde{\alpha}X^{q^t}]^{q^t}
\end{equation}
for suitable $\alpha_0,\dots,\alpha_{k-2t}, \Tilde{\alpha} \in \mathbb{F}_{q^n},$ where $\deg(h)\leq q^{\max\{k-2t,t\}}.$ Thus \[ \# \{ P \in \mathcal{C}: m_P(\mathcal{C})=q^t+1\}\leq q^{\max\{2k-4t,2t\}}.\]  
\end{rem}

 Consider now the line at infinity $\ell_\infty : Z=0.$ A homogeneous equation of $\mathcal{C}$ is given by 
\begin{equation}
    \begin{vmatrix}
X^{{q^t}} & XZ^{q^{2t}-1}+\delta X^{q^{2t}} & X^{q^{2t}}Z^{q^k-q^{2t}}+\dots +CX^{q^k}\\
Y^{{q^t}} & YZ^{q^{2t}-1}+\delta Y^{q^{2t}} & Y^{q^{2t}}Z^{q^k-q^{2t}}+\dots +CY^{q^k} \\
\lambda^{{q^t}} & F(\lambda)Z^{q^{2t}-q^t} & G(\lambda)Z^{q^k-q^t}
\end{vmatrix}=0.
\end{equation} Thus  $\mathcal{C}\cap \ell_\infty$ is defined by \begin{equation}
   \begin{vmatrix}
\delta X^{q^{2t}} & CX^{q^k}\\
\delta Y^{q^{2t}} & CY^{q^k}
\end{vmatrix}=\delta C(X^{q^{2t}}Y^{q^k} - X^{q^k}Y^{q^{2t}})=\delta C X^{q^{2t}}Y^{q^{2t}}\prod_{\xi \in \mathbb{F}_{q^{k-2t}} \setminus \{0\}}  (Y - \xi X)^{q^{2t}}=0,
 \end{equation} 
so points at infinity of $\mathcal{C}$ are of type $P=(1:\xi:0),$ $\xi \in \mathbb{F}_{q^{k-2t}},$ or $P=(0:1:0).$ \\
Let consider the projectivity $\Psi:(x:y:z) \mapsto (x:y-\xi x: z)$  that maps $(1:\xi:0)$ into $(1:0:0).$
An affine equation of $\Psi(\mathcal{C})$ (dehomogenizing with respect to $X$) is given by
\begin{equation}  \label{eq f punti infinito}
f(Y,Z):=\begin{vmatrix}
1 & Z^{q^{2t}-1}+\delta & Z^{q^k-q^{2t}}+\dots +C\\
(Y+\xi)^{{q^t}} & F^*(Y+\xi,Z) & G^*(Y+\xi, Z) \\
\lambda^{{q^t}} & F(\lambda)Z^{q^{2t}-q^t} & G(\lambda)Z^{q^k-q^t}
\end{vmatrix}=0,
\end{equation}
where
\begin{eqnarray}
    F^*(Y+\xi,Z)&=&YZ^{q^{2t}-1}+\delta Y^{q^{2t}}+\xi Z^{q^{2t}-1} + \delta\xi^{q^{2t}};  \\
    G^*(Y+\xi, Z)&=&Y^{q^{2t}}Z^{q^k-q^{2t}}+\dots +CY^{q^k}+\xi^{q^{2t}}Z^{q^k-q^{2t}}+\dots + C\xi^{q^k}.
\end{eqnarray}
It is not hard to see that
\begin{equation*}
\begin{vmatrix}
1 & Z^{q^{2t}-1}+\delta & Z^{q^k-q^{2t}}+\dots +C\\
(Y+\xi)^{{q^t}} & F^*(Y+\xi,Z) & G^*(Y+\xi, Z) \\
\lambda^{{q^t}} & F(\lambda)Z^{q^{2t}-q^t} & G(\lambda)Z^{q^k-q^t}
\end{vmatrix}=B_{q^{2t}-q^t}+B_{q^{2t}-1}+B_{q^{2t}} + L(Y,Z),
\end{equation*}
where
\begin{eqnarray}
 B_{q^{2t}-q^t}&=&-CF(\lambda)(\xi^{q^{k-t}}-\xi)^{q^t}Z^{q^{2t}-q^t},\\
 B_{q^{2t}-1}&=&C\lambda^{q^t}(\xi^{q^k}-\xi)Z^{q^{2t}-1}, \\
 B_{q^{2t}}&=&C[ F(\lambda)Y^{q^t}Z^{q^{2t}-q^t} -\lambda^{q^t}(YZ^{q^{2t}-1}+\delta Y^{q^{2t}})],
\end{eqnarray}
and
\begin{equation}
    L(Y,Z)=\sum_{i,j}\alpha^{(i,j)}Y^iZ^j, \ \ \ \ \ \ \ \ \ i+j \geq q^k-q^{k-1}=q^{k-1}(q-1)\geq q^{2t}(q-1).
\end{equation}
\begin{prop}  \label{prop molteplicità punti singolari all'inf}
    Let $P_\xi=(1: \xi : 0), \ \xi \in \mathbb{F}_{q^{k-2t}}.$ Then
    $m_{P_{\xi}}(\mathcal{C})=q^{2t}-q^t$ or  $m_{P_{\xi}}(\mathcal{C})=q^{2t}.$ \\
    Also $m_{P_{\xi}}(\mathcal{C})=q^{2t}$ \  if and only if  \   $\xi \in \mathbb{F}_{q^{\gcd(k,t)}}.$
\end{prop}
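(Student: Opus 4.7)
The plan is to extract $m_{P_\xi}(\mathcal{C})$ directly from the local expansion
\[
f(Y,Z)=B_{q^{2t}-q^t}+B_{q^{2t}-1}+B_{q^{2t}}+L(Y,Z)
\]
already displayed above, which presents $\mathcal{C}$ in a chart centered at the translate of $P_\xi$ to the origin. Since $m_{P_\xi}(\mathcal{C})$ is the smallest degree of a nonzero homogeneous component of $f$, and since every monomial of $L$ has total degree at least $q^{2t}(q-1)\ge q^{2t}$, the only blocks that can contribute strictly below degree $q^{2t}$ are $B_{q^{2t}-q^t}$ and $B_{q^{2t}-1}$, of degrees $q^{2t}-q^t$ and $q^{2t}-1$.

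First I would test the vanishing of the scalar $-CF(\lambda)(\xi^{q^{k-t}}-\xi)^{q^t}$ multiplying $Z^{q^{2t}-q^t}$ in $B_{q^{2t}-q^t}$. One has $C\neq 0$ by (\ref{eq g caso 2t}) and $F(\lambda)\neq 0$ by the choice of $\lambda$, so this coefficient is nonzero precisely when $\xi\notin\mathbb{F}_{q^{k-t}}$. In that case $B_{q^{2t}-q^t}$ carries the unique lowest-degree monomial of $f$, giving immediately $m_{P_\xi}(\mathcal{C})=q^{2t}-q^t$.

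Suppose instead $\xi\in\mathbb{F}_{q^{k-t}}$. Combined with the hypothesis $\xi\in\mathbb{F}_{q^{k-2t}}$, this gives
\[
\xi\in\mathbb{F}_{q^{k-t}}\cap\mathbb{F}_{q^{k-2t}}=\mathbb{F}_{q^{\gcd(k-t,\,k-2t)}}=\mathbb{F}_{q^{\gcd(k,t)}},
\]
using $\gcd(k-t,k-2t)=\gcd(k-2t,t)=\gcd(k,t)$. Since $\gcd(k,t)\mid k$, this forces $\xi^{q^k}=\xi$, so the coefficient $C\lambda^{q^t}(\xi^{q^k}-\xi)$ of $B_{q^{2t}-1}$ also vanishes. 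To pin the multiplicity to $q^{2t}$ I would then isolate the monomial $-C\delta\lambda^{q^t}Y^{q^{2t}}$ inside $B_{q^{2t}}$: its coefficient is nonzero because $C\neq 0$, $\delta\neq 0$ (forced by $N_{q^n/q}(\delta)\neq 1$) and $\lambda\neq 0$, and for $q\ge 3$ the block $L$ contributes only in degrees $\ge q^{2t}(q-1)>q^{2t}$, ruling out cancellation in the degree-$q^{2t}$ tangent cone.

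Putting the two cases together yields both the dichotomy and the equivalence: $m_{P_\xi}(\mathcal{C})=q^{2t}$ if and only if $\xi\in\mathbb{F}_{q^{k-t}}$, which, under the standing assumption $\xi\in\mathbb{F}_{q^{k-2t}}$, is the same as $\xi\in\mathbb{F}_{q^{\gcd(k,t)}}$ since $\gcd(k,t)\mid k-t$ supplies the converse inclusion. The only places where care is required are the subfield intersection step and the check that the $Y^{q^{2t}}$ coefficient in $B_{q^{2t}}$ genuinely survives; everything else reduces to a direct degree inspection of the determinantal expansion already laid out above.
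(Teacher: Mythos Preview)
Your argument is correct and is exactly the approach the paper takes: its proof just records the subfield identities $\mathbb{F}_{q^{k-2t}}\cap\mathbb{F}_{q^{k-t}}=\mathbb{F}_{q^{\gcd(k-2t,k-t)}}$ and $\gcd(k-2t,k-t)=\gcd(k-t,t)=\gcd(k,t)=\gcd(k,k-t)$, leaving the degree inspection of $B_{q^{2t}-q^t},B_{q^{2t}-1},B_{q^{2t}}$ implicit, whereas you spell it out. One small slip: $N_{q^n/q}(\delta)\neq 1$ does not by itself force $\delta\neq 0$ (since $N_{q^n/q}(0)=0$); rather $\delta\neq 0$ is implicit in $\deg_q F=2t$, and in any event $B_{q^{2t}}$ already survives via its $CF(\lambda)Y^{q^t}Z^{q^{2t}-q^t}$ term.
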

\begin{proof}
   The statement follows immediately from 
   \begin{eqnarray*}
       &&\mathbb{F}_{q^{k-2t}}\cap \mathbb{F}_{q^{k-t}}=\mathbb{F}_{q^{\gcd(k-2t,k-t)}}, \\
       &&\mathbb{F}_{q^{k-t}}\cap \mathbb{F}_{q^{k}}=\mathbb{F}_{q^{\gcd(k-t,k)}}, \\
       &&\gcd(k-2t,k-t)=\gcd(k-t,t)=\gcd(k,t)=\gcd(k,k-t).
   \end{eqnarray*}
     \end{proof}

\begin{prop} \label{prop stime ipmax punti all'infinito}
    Let $P=(1:\xi:0),$ where $\xi \in \mathbb{F}_{q^{\gcd(k,t)}},$ or $P=(0:1:0).$ Then $I_{P,max}(\mathcal{C})\leq \frac{q^{4t}}{4}.$
\end{prop}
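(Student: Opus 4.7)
The plan is to identify the tangent cone of $\mathcal{C}$ at $P$, verify its separability, and invoke the sharp form of Bezout's inequality at a point to conclude.

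For $P=(1:\xi:0)$ with $\xi\in \mathbb{F}_{q^{\gcd(k,t)}}$, Proposition \ref{prop molteplicità punti singolari all'inf} already gives $m_P(\mathcal{C})=q^{2t}$. The hypothesis $\xi\in\mathbb{F}_{q^{\gcd(k,t)}}$ forces $\xi^{q^{k-t}}=\xi=\xi^{q^k}$, which makes the coefficients $B_{q^{2t}-q^t}$ and $B_{q^{2t}-1}$ vanish; combined with the degree bound $i+j\geq q^{2t}(q-1)>q^{2t}$ on the monomials of $L(Y,Z)$, the tangent cone of $\mathcal{C}$ at $P$ reduces to
$$B_{q^{2t}}(Y,Z)=C\bigl[F(\lambda)Y^{q^t}Z^{q^{2t}-q^t}-\lambda^{q^t}YZ^{q^{2t}-1}-\delta\lambda^{q^t}Y^{q^{2t}}\bigr].$$
Dehomogenizing at $Z=1$ produces
$$B_{q^{2t}}(Y,1)=-C\delta\lambda^{q^t}Y^{q^{2t}}+CF(\lambda)Y^{q^t}-C\lambda^{q^t}Y,$$
which is an $\mathbb{F}_q$-linearized polynomial in $Y$ with nonzero coefficient of $Y$ (since $C\lambda\neq 0$); such linearized polynomials are always separable, so $B_{q^{2t}}(Y,1)$ has $q^{2t}$ pairwise distinct roots in $\overline{\mathbb{F}_q}$, and consequently $B_{q^{2t}}(Y,Z)$ factors as a product of $q^{2t}$ pairwise distinct linear forms $Y-\alpha Z$ over $\overline{\mathbb{F}_q}$.

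With the tangent cone squarefree, I would argue that any two distinct absolutely irreducible components $\mathcal{C}_1,\mathcal{C}_2$ of $\mathcal{C}$ through $P$, of local multiplicities $m_1,m_2$, have coprime tangent cones at $P$: indeed, the tangent cone of $\mathcal{C}$ is the product with multiplicity of the tangent cones of its components through $P$, so each linear factor of $B_{q^{2t}}$ can appear in at most one $\mathcal{C}_i$. By the sharp Bezout inequality at a point (\cite[Section 3.3]{fulton}), coprime tangent cones force $I_P(\mathcal{C}_1,\mathcal{C}_2)=m_1 m_2$, and since $m_1+m_2\leq m_P(\mathcal{C})=q^{2t}$, the AM--GM inequality gives $m_1 m_2\leq(q^{2t}/2)^2=q^{4t}/4$. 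Since this bound is uniform in the choice of components, $I_{P,max}(\mathcal{C})\leq q^{4t}/4$. The remaining case $P=(0:1:0)$ is handled via the symmetry of the determinant defining $\mathcal{C}$ under $(X,Y,Z)\mapsto(Y,X,Z)$, which sends $P$ to the point $(1:0:0)=P_{\xi=0}$ of the first type since $0\in\mathbb{F}_{q^{\gcd(k,t)}}$.

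The main delicate step is the separability verification: it hinges on recognizing $B_{q^{2t}}(Y,1)$ as a linearized polynomial with nonvanishing coefficient of $Y$, which depends on the standing hypotheses $C\neq 0$ and $F(\lambda),G(\lambda)\neq 0$. A minor secondary concern is excluding potential contributions from $L(Y,Z)$ to the degree-$q^{2t}$ component of the tangent cone, which follows from its degree bound once $q\geq 3$ (the exceptional small-$q$ cases will be collected elsewhere in the paper).
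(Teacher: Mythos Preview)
Your proof is correct and follows the same route as the paper: Proposition \ref{prop molteplicità punti singolari all'inf} gives $m_P(\mathcal{C})=q^{2t}$, the tangent cone $B_{q^{2t}}$ is separable, and Lemma \ref{lemma easy}(i) yields the bound; the case $P=(0:1:0)$ is reduced to $(1:0:0)$ via the $(X,Y)$-symmetry in both treatments. The paper's argument is a one-line citation of Lemma \ref{lemma easy}(i) and Proposition \ref{prop molteplicità punti singolari all'inf}, leaving the separability of $B_{q^{2t}}$ implicit, whereas you spell out both that verification (linearized polynomial with nonzero linear term) and the coprime-tangent-cone/AM--GM mechanism underlying the lemma.
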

\begin{proof}
    \begin{itemize}
        \item $P=(1:\xi:0),$ with $\xi \in \mathbb{F}_{q^{\gcd(k,t)}}.$ \\ The statement follows from Lemma \ref{lemma easy} and Proposition \ref{prop molteplicità punti singolari all'inf}.
        \item $P=(0:1:0).$ \\
    Since $\mathcal{C}$ is fixed by the projectivity $(x:y:z) \mapsto (y:x:z),$ we get that $I_{(0:1:0),max}=I_{(1:0:0),max} \leq \frac{q^{4t}}{4}.$
        \end{itemize}
\end{proof}
 \subsection{Study of the intersection multiplicity of branches at the singular points $P_{\xi}=(1:\xi:0),$ with $\xi \notin \mathbb{F}_{q^{\gcd(k,t)}}$}
In this subsection we consider singular points of type $P_{\xi}=(1:\xi:0),$ where $\xi \notin \mathbb{F}_{q^{\gcd(k,t)}}.$ In particular we study branches centred at those points an we determine upper bounds on the multiplicity of intersection of putative components of $\mathcal{C}$ (and therefore $\mathcal{A}$).

\begin{lem}  \label{lem un solo ramo}
    Let $P_\xi=(1:\xi:0),$ with $\xi \notin \mathbb{F}_{q^{\gcd(k,t)}}.$ Then there is a unique branch centered at $P_\xi.$ Thus, the multiplicity of two putative components of $\mathcal{C}$ (and therefore $\mathcal{A}$) in $P_\xi$ is $0.$
\end{lem}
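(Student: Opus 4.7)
The plan is to work in local coordinates at $P_\xi$ provided by the projectivity $\Psi$, where the curve is given explicitly by $f(Y,Z)=B_{q^{2t}-q^t}+B_{q^{2t}-1}+B_{q^{2t}}+L(Y,Z)=0$, and to count the branches at the origin via a Newton polygon / Puiseux argument combined with a Galois orbit count.

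First I would verify that the coefficient $c:=-CF(\lambda)(\xi^{q^{k-t}}-\xi)^{q^t}$ of $B_{q^{2t}-q^t}$ is nonzero. Since $\xi\in\mathbb{F}_{q^{k-2t}}$ and $\mathbb{F}_{q^{k-2t}}\cap\mathbb{F}_{q^{k-t}}=\mathbb{F}_{q^{\gcd(k-2t,k-t)}}=\mathbb{F}_{q^{\gcd(k,t)}}$, the hypothesis $\xi\notin\mathbb{F}_{q^{\gcd(k,t)}}$ forces $\xi\notin\mathbb{F}_{q^{k-t}}$, so $\xi^{q^{k-t}}\neq\xi$ and $c\neq 0$. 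Hence the tangent cone at the origin equals $cZ^{q^{2t}-q^t}$ and the only tangent direction is $Z=0$. I would then analyze the Newton polygon of $f$: its extremal monomials are $(0,q^{2t}-q^t)$ with coefficient $c$ and $(q^{2t},0)$ with coefficient $-C\delta\lambda^{q^t}$; a direct check shows that all other monomials contributed by $B_{q^{2t}-1}$, $B_{q^{2t}}$, and $L$ (the latter having total degree $\geq q^{2t}(q-1)$) lie strictly above the segment joining these two corners. Therefore the Newton polygon has a single edge of slope $-(q^t-1)/q^t$, already in lowest terms since $\gcd(q^t-1,q^t)=1$.

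Along this edge, any branch admits a Puiseux parametrization of the form $Y=\tau^{q^t-1}$, $Z=\alpha\tau^{q^t}+O(\tau^{q^t+1})$, and substituting into the two corner monomials produces the leading equation $c\alpha^{q^{2t}-q^t}=C\delta\lambda^{q^t}$. Using $q^{2t}-q^t=q^t(q^t-1)$ and the bijectivity of Frobenius on $\overline{\mathbb{F}_{q^n}}$, this is equivalent to $\alpha^{q^t-1}=(C\delta\lambda^{q^t}/c)^{1/q^t}$, which has exactly $q^t-1$ distinct nonzero roots. The reparametrization $\tau\mapsto\zeta\tau$ with $\zeta^{q^t-1}=1$ fixes $Y$ and sends $\alpha$ to $\alpha\zeta^{q^t}=\alpha\zeta$ (using $\zeta^{q^t}=\zeta\cdot\zeta^{q^t-1}=\zeta$), so the cyclic group $\mu_{q^t-1}$ permutes the $q^t-1$ solutions transitively. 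Hence all formal Puiseux parametrizations describe a single branch of $\mathcal{C}$ at $P_\xi$. Since two distinct local irreducible components would give rise to at least two distinct branches, at most one component of $\mathcal{C}$ passes through $P_\xi$, so $I_{P_\xi,\max}(\mathcal{C})=0$; the same conclusion transfers to $\mathcal{A}$, since every component of $\mathcal{A}$ is a component of $\mathcal{C}$.

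The main technical subtlety is that in characteristic $p$ the edge polynomial $cW^{q^{2t}-q^t}-C\delta\lambda^{q^t}$ is inseparable, being a $q^t$-th power of a degree $q^t-1$ separable polynomial, so a direct Hensel-type argument on individual roots does not apply; the orbit-counting under $\mu_{q^t-1}$ handles this uniformly by identifying branches with Galois orbits of formal Puiseux parametrizations.
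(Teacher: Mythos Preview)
There is a genuine gap. Your Newton-polygon computation is correct (for $q\ge 3$ one has $(q^t-1)i+q^tj\ge(q^t-1)(i+j)\ge 2(q^t-1)q^{2t}>q^{3t}-q^{2t}$ for every monomial of $L$, so all of them lie strictly above the segment), and the $q^t-1$ values of the leading coefficient $\alpha$ are indeed permuted transitively by $\mu_{q^t-1}$. But this only controls the \emph{first} Puiseux coefficient. The lattice length of your edge is $\gcd(q^{2t}-q^t,q^{2t})=q^t$, so above each value of $\alpha$ there are still $q^t$ roots of the associated Weierstrass polynomial to be accounted for, and since the edge polynomial is purely inseparable the Newton--Puiseux process does \emph{not} terminate after one step: substituting the leading term produces a new polygon whose shape depends on higher-order coefficients of $f$, and distinct continuations can give distinct branches sharing the same leading $\alpha$. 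Two full formal solutions that agree to first order but differ later are not related by any reparametrisation $\tau\mapsto\zeta\tau$, so your orbit count does not settle the branch count.

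A tell-tale sign of the gap is that your argument never invokes the hypothesis $L_\xi(\lambda)\neq0$ of Equation~(\ref{lambda scelto opportunamente}). In the paper's proof this condition is essential: after a chain of $2q^t+1$ local quadratic transformations one reaches a curve $f^{(2q^t+1)}$ whose tangent cone at the origin is a nonzero multiple of $Z^{q^t-1}$ \emph{precisely} when $L_\xi(\lambda)\neq0$; the degree-$q^t$ part then contains $Y^{q^t}$, so by Lemma~\ref{lemma easy}(ii) one gets a single-edge Newton polygon from $(0,q^t-1)$ to $(q^t,0)$ with coprime endpoints, hence a unique branch and $I_{P_\xi,\max}=0$. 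If $L_\xi(\lambda)=0$ that step collapses and further blow-ups would be required. Since your argument, if valid, would prove the lemma for \emph{every} admissible $\lambda$, it cannot be complete: the higher-order information that the paper extracts through the iterated quadratic transformations is exactly what your single Newton-polygon pass is missing.
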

\begin{proof}
    Consider the local quadratic transformation $f^{(1)}(Y,Z)=f(Y,YZ)/Y^{q^{2t}-q^t},$ where $f(Y,Z)$ is defined as in (\ref{eq f punti infinito}). We have that \begin{eqnarray*}
    f^{(1)}(Y,Z)&=&\alpha^{(0,q^{2t}-q^t)}Z^{q^{2t}-q^t}+\alpha^{(0,q^{2t}-1)}Y^{q^t-1}Z^{q^{2t}-1}+\alpha^{(q^t,q^{2t}-q^t)}Y^{q^t}Z^{q^{2t}-q^t} \\&+&\alpha^{(1,q^{2t}-1)}Y^{q^t}Z^{q^{2t}-1}+\alpha^{(q^{2t},0)}Y^{q^{t}}+L^{(1)}(Y,Z),
\end{eqnarray*}
where
\begin{eqnarray}
    \alpha^{(0,q^{2t}-q^t)}&=&-CF(\lambda)(\xi^{q^{k-t}}-\xi)^{q^t};\\
    \alpha^{(0,q^{2t}-1)}&=&C\lambda^{q^t}(\xi^{q^k}-\xi);\\
    \alpha^{(1,q^{2t}-1)}&=&-C\lambda^{q^t};\\
    \alpha^{(q^{2t},0)}&=&-C\lambda^{q^t}\delta;\\
    \alpha^{(q^t,q^{2t}-q^t)}&=&CF(\lambda);\\
    L^{(1)}(Y,Z)&=&\sum_{i,j}\alpha^{(i,j)}Y^{i+j-q^{2t}+q^{t}}Z^j.
\end{eqnarray}
Since $f^{(1)}(0,Z)=\alpha^{(0,q^{2t}-q^t)}Z^{q^{2t}-q^t},$ the branches centered at $(0,0)$ belonging to $f=0$ correspond to the branches centered at $(0,0)$ belonging to $f^{(1)}=0.$ Now the origin is a non-ordinary $q^t$-point of the curve defined by $f^{(1)}$ and the line $Z=0$ is not the tangent line at the origin; so we can apply the local quadratic transformation $f^{(2)}(Y,Z)=f^{(1)}(YZ,Z)/Z^{q^t}.$ After applying the same trasformation $q^t-2$ times, we obtain
\begin{eqnarray*}
    f^{(q^t-1)}(Y,Z)&=&\alpha^{(0,q^{2t}-q^t)}Z^{q^t}+\alpha^{(0,q^{2t}-1)}Y^{q^t-1}Z^{q^{2t}-q^t+1}+\alpha^{(q^t,q^{2t}-q^t)}Y^{q^t}Z^{q^{2t}-q^t} \\&+&\alpha^{(1,q^{2t}-1)}Y^{q^t}Z^{q^{2t}-1}+\alpha^{(q^{2t},0)}Y^{q^{t}}+L^{(q^t-1)}(Y,Z),
\end{eqnarray*}
where $L^{(q^t-1)}(Y,Z)=\sum\limits_{i,j}\alpha^{(i,j)}Y^{i+j-q^{2t}+q^{t}}Z^{j+(q^t-2)(i+j-q^{2t})}.$ \\Note that, for each $2\leq \ell \leq q^t-1,$ $L^{(\ell)}(Y,Z)=\sum\limits_{i,j}\alpha^{(i,j)}Y^{i+j-q^{2t}+q^{t}}Z^{j+(\ell -1)(i+j-q^{2t})}.$ \\
Also, $j+(\ell -1)(i+j-q^{2t}) \geq i+j-q^{2t} >0,$ so $L^{(\ell)}(Y,0)=0$ and $f^{(\ell)}(Y,0)=\alpha^{(q^{2t},0)}Y^{q^{t}}.$
Therefore, the branches centered at $(0,0)$ belonging to $f=0$ are in one-to-one correspondence with the branches centered at $(0,0)$ belonging to $f^{(\ell)}=0,$ for each $2\leq \ell \leq q^t-1.$\\
Now consider the transformation $(Y,Z) \mapsto (Y-\beta_1Z,Z),$ where $\alpha^{(q^{2t},0)}\beta_1^{q^t}+\alpha^{(0,q^{2t}-q^t)}=0,$ i.e.
\begin{equation} \label{eq beta1}
    \beta_1^{q^t}=-\frac{F(\lambda)(\xi^{q^{k-t}}-\xi)^{q^t}}{\lambda^{q^t} \delta}.
\end{equation}
We get \begin{eqnarray*}
    f^{(q^t)}(Y,Z)&=&f^{(q^t-1)}(Y+\beta_1Z,Z)=\alpha^{(0,q^{2t}-1)}\sum_{\mu=0}^{q^t-1}\beta_1^{q^t-1-\mu}Y^\mu Z^{q^{2t}-\mu}\\&+&\alpha^{(q^t,q^{2t}-q^t)}(Y^{q^t}Z^{q^{2t}-q^t} +\beta_1^{q^t}Z^{q^{2t}})\\&+&\alpha^{(1,q^{2t}-1)}(Y^{q^t}Z^{q^{2t}-1}+\beta_1^{q^t}Z^{q^{2t}+q^t-1})\\&+&\alpha^{(q^{2t},0)}Y^{q^{t}}+L^{(q^t)}(Y,Z),
\end{eqnarray*}
where $L^{(q^t)}(Y,Z)=\sum\limits_{i+j\geq q^{2t}(q-1)}\alpha^{(i,j)}\sum\limits_{\nu=0}^{i+j-q^{2t}+q^{t}}\binom{i+j-q^{2t}+q^t}{\nu}\beta_1^{i+j-q^{2t}+q^{t}-\nu}Y^{\nu}Z^{j+(q^t -1)(i+j-q^{2t})+q^t-\nu}.$ \\
Since the origin is still a non-ordinary $q^t$-point of  $f^{(q^t)}=0$ and the line $Z=0$ is not the tangent line at the origin, we can apply the local quadratic transformation $f^{(q^t+1)}(Y,Z)=f^{(q^t)}(YZ,Z)/Z^{q^t}.$ After applying the same trasformation $q^t-1$ times, we obtain 
\begin{eqnarray*}
    f^{(2q^t-1)}(Y,Z)&=&\alpha^{(0,q^{2t}-1)}\sum_{\mu=0}^{q^t-1}(-1)^\mu\beta_1^{q^t-1-\mu}Y^\mu Z^{q^{t}+(q^t-2)\mu}+\alpha^{(q^t,q^{2t}-q^t)}(Y^{q^t}Z^{q^{2t}-q^t} +\beta_1^{q^t}Z^{q^{t}})\\&+&\alpha^{(1,q^{2t}-1)}(Y^{q^t}Z^{q^{2t}-1}+\beta_1^{q^t}Z^{2q^t-1})+\alpha^{(q^{2t},0)}Y^{q^{t}}+L^{(2q^t-1)}(Y,Z),
\end{eqnarray*}
where $L^{(2q^t-1)}(Y,Z)=\!\!\!\!\sum\limits_{i+j\geq q^{2t}(q-1)}\alpha^{(i,j)}\sum\limits_{\nu=0}^{i+j-q^{2t}+q^{t}}\binom{i+j-q^{2t}+q^t}{\nu}\beta_1^{i+j-q^{2t}+q^{t}-\nu}Y^{\nu}Z^{j+(q^t -1)(i+j-q^{2t})+(q^t-2)(\nu-q^t)}.$ \\
Note that, for each $1\leq \ell \leq q^t-2,$ $$j+(q^t -1)(i+j-q^{2t})+\ell(\nu-q^t)\geq j+\ell(i+j-q^{2t}-q^t+\nu)\geq i+j-q^{2t}(1+\frac{1}{q^t})>0,$$
so $L^{(\ell)}(Y,0)=0$ and $f^{(\ell)}(Y,0)=\alpha^{(q^{2t},0)}Y^{q^{t}}.$
Therefore the branches centered at $(0,0)$ for $f=0$ are in one-to-one correspondence with the branches centered at $(0,0)$ for $f^{(2q^t-1)}=0.$
Consider the map $(Y,Z)\mapsto (Y-\beta_2Z,Z),$ with $\alpha^{(0,q^{2t}-1)}\beta_1^{q^t-1}+\alpha^{(q^t,q^{2t}-q^t)}\beta_1^{q^t}+\alpha^{(q^{2t},0)}\beta_2^{q^t}=0,$ i.e. 
\begin{eqnarray*}
    \beta_2^{q^{2t}}&=&\dfrac{\beta_1^{q^t(q^t-1)}[\lambda^{q^t}(\xi^{q^k}-\xi)+F(\lambda)\beta_1]^{q^t}}{\lambda^{q^{2t}}\delta^{q^t}}\\
    &=&\dfrac{F(\lambda)^{q^t-1}(\xi^{q^{k-t}}-\xi)^{q^t(q^t-1)}[\lambda^{q^t(q^t+1)}\delta (\xi^{q^k}-\xi)^{q^t}-F(\lambda)^{q^t+1}(\xi^{q^{k-t}}-\xi)^{q^t}]}{\lambda^{2q^{2t}}\delta^{2q^t}}.
\end{eqnarray*}
We get 
\begin{eqnarray*}
    f^{(2q^t)}(Y,Z)\!\!\!&=&\!\!\!\alpha^{(0,q^{2t}-1)}\!\sum_{\mu=1}^{q^t-1}\!\sum_{\Tilde{\mu}=0}^{\mu}(-1)^\mu\binom{\mu}{\Tilde{\mu}}\beta_1^{q^t-1-\mu}\beta_2^{\mu-\Tilde{\mu}}Y^{\Tilde{\mu}} Z^{q^{t}+(q^t-1)\mu-\Tilde{\mu}}+\alpha^{(q^t,q^{2t}-q^t)}(Y^{q^t}Z^{q^{2t}-q^t}+\beta_2^{q^t}Z^{q^{2t}})\\\!\!\!&+&\!\!\!\alpha^{(1,q^{2t}-1)}(Y^{q^t}Z^{q^{2t}-1}+\beta_2^{q^t}Z^{q^{2t}+q^t-1}+\beta_1^{q^t}Z^{2q^t-1})+\alpha^{(q^{2t},0)}Y^{q^{t}}+L^{(2q^t)}(Y,Z),
\end{eqnarray*}
where $L^{(2q^t)}(Y,Z)=\sum\limits_{i+j\geq q^{2t}(q-1)}\sum\limits_{\nu=0}^{i+j-q^{2t}+q^{t}}\sum\limits_{\Tilde{\nu}=0}^{\nu}\Tilde{\alpha}_{\nu\Tilde{\nu}}^{(i,j)}Y^{\Tilde{\nu}}Z^{j+(q^t -1)(i+j-q^{2t})+(q^t-2)(\nu-q^t)+\nu-\Tilde{\nu}},$ \\
for suitable $\Tilde{\alpha}_{\nu\Tilde{\nu}}^{(i,j)} \in \mathbb{F}_{q^n}.$ Let apply again the quadratic transformation $f^{(2q^t+1)}(Y,Z)=f^{(2q^t)}(YZ,Z)/Z^{q^t}.$ It follows that
\begin{eqnarray*}
    f^{(2q^t+1)}(Y,Z)\!\!\!&=&\!\!\!\alpha^{(0,q^{2t}-1)}\sum_{\mu=1}^{q^t-1}\sum_{\Tilde{\mu}=0}^{\mu}(-1)^\mu\binom{\mu}{\Tilde{\mu}}\beta_1^{q^t-1-\mu}\beta_2^{\mu-\Tilde{\mu}}Y^{\Tilde{\mu}} Z^{(q^t-1)\mu}+\alpha^{(q^t,q^{2t}-q^t)}(Y^{q^t}Z^{q^{2t}-q^t}+\beta_2^{q^t}Z^{q^{2t}-q^t})\\\!\!\!&+&\!\!\!\alpha^{(1,q^{2t}-1)}(Y^{q^t}Z^{q^{2t}-1}+\beta_2^{q^t}Z^{q^{2t}-1}+\beta_1^{q^t}Z^{q^t-1})+\alpha^{(q^{2t},0)}Y^{q^{t}}+L^{(2q^t+1)}(Y,Z),
\end{eqnarray*}
where $L^{(2q^t+1)}(Y,Z)=\sum\limits_{i+j\geq q^{2t}(q-1)}\sum\limits_{\nu=0}^{i+j-q^{2t}+q^{t}}\sum\limits_{\Tilde{\nu}=0}^{\nu}\Tilde{\alpha}_{\nu\Tilde{\nu}}^{(i,j)}Y^{\Tilde{\nu}}Z^{j+(q^t -1)(i+j-q^{2t}+\nu-q^t)}.$ \\
Thus the smallest homogeneous part of $f^{(2q^t+1)}(Y,Z)$ is given by $(\alpha^{(1,q^{2t}-1)}\beta_1^{q^t}-\alpha^{(0,q^{2t}-1)}\beta_1^{q^t-2}\beta_2)Z^{q^t-1},$\\

where 
\begin{eqnarray*}
(\alpha^{(1,q^{2t}-1)}\beta_1^{q^t}-\alpha^{(0,q^{2t}-1)}\beta_1^{q^t-2}\beta_2)^{q^{2t}}&=&-C^{q^{2t}}\lambda^{q^{3t}}\beta_1^{q^{2t}(q^t-2)}[\beta_1^{2q^{2t}}+(\xi^{q^k}-\xi)^{q^t}\beta_2^{q^{2t}}].   
\end{eqnarray*}
Therefore from  (\ref{eq beta1}) it follows that $m_{(0,0)}(f^{(2q^t+1)})=q^t-1$ if and only \begin{equation*}
    \beta_1^{2q^{2t}}+(\xi^{q^k}-\xi)^{q^t}\beta_2^{q^{2t}}\neq 0,
\end{equation*}
i.e., after some computations,
\begin{equation}
    \dfrac{F(\lambda)^{q^{t}-1}}{\lambda^{2q^{2t}}\delta^{2q^t}}(\xi^{q^{k-t}}-\xi)^{q^{2t}-q^t}[(\xi-\xi^{q^{t}})^{q^{k+t}}(\xi^{q^{k-t}}-\xi)^{q^t}F(\lambda)^{q^t+1}+\delta(\xi^{q^k}-\xi)^{q^t(q^t+1)}\lambda^{q^t(q^t+1)}]\neq 0
\end{equation}
(recall Equation \ref{lambda scelto opportunamente}).

Also, the homogeneous part of degree $q^t$ in $f^{(2q^t+1)}(Y,Z)$ is not zero and contains the term $\alpha^{(q^{2t},0)}Y^{q^{t}}.$ By Lemma \ref{lemma easy}(i) it follows that $I_{P_\xi,max}(\mathcal{C})=0,$ i.e. there is a unique branch of $\mathcal{C}$ centered at $P_\xi.$
\end{proof}

\section{Main result}
\subsection{Case $\min\deg_q(G(X))=2t$}
Thanks to the results of the previous section, we now are able to prove the following theorem. 
\begin{thm} \label{th main caso 2t}
    Let $q,t,k$ be intergers, with $q>2,$ $t>0$ and $k>2t.$ If $(t,q) \notin \{(1,3);(1,4);(1,5);(2,3)\}$ and  $n\geq k+t+1,$ then $\mathcal{A}$ has an absolutely irreducible component defined over $\mathbb{F}_{q^n}$ and not contained in the curve defined by $F_{(x,x^q,x^{q^2})}(X,Y,\lambda)=0.$ 
\end{thm}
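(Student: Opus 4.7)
The plan is to apply Criterion~\ref{criterio due noni} to the plane curve $\mathcal{A}$ and show that
\[
\sum_{P\in Sing(\mathcal{A})} I_{P,max}(\mathcal{A}) \;<\; \tfrac{2}{9}\,\deg^{2}(\mathcal{A}).
\]
Since $\mathcal{A}$ is defined as the quotient $F_{\underline{f}}(X,Y,\lambda)/F_{(x,x^q,x^{q^2})}(X,Y,\lambda)$ (the Moore denominator divides $F_{\underline{f}}$ because each of its $q^2+q+1$ linear factors corresponds to an $\mathbb{F}_q$-linear dependence among $X,Y,\lambda$), the linear factors of the denominator have already been stripped off; the generic choice of $\lambda$ dictated by~\eqref{lambda scelto opportunamente} ensures that none of them re-appears in $\mathcal{A}$. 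Hence an absolutely irreducible $\mathbb{F}_{q^n}$-rational component of $\mathcal{A}$ produced by the criterion is automatically not contained in $F_{(x,x^q,x^{q^2})}(X,Y,\lambda)=0$, which is the second half of the conclusion.

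The first subcomputation is $\deg\mathcal{A}$: expanding the determinant in~\eqref{eq curva C}, the monomials of maximum total degree in $(X,Y)$ are $X^{q^k}Y^{q^{2t}}$ and $X^{q^{2t}}Y^{q^k}$, coming from the cofactor $\lambda^{q^t}$ paired with $\delta X^{q^{2t}}\cdot CY^{q^k}$ and its mirror; thus $\deg\mathcal{C}=q^k+q^{2t}$, and since the denominator has degree $q^2+q$ in $(X,Y)$ we get $\deg\mathcal{A}=q^k+q^{2t}-q^2-q$.

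Next I bound $\sum_P I_{P,max}(\mathcal{A})\le\sum_P I_{P,max}(\mathcal{C})$ by partitioning $Sing(\mathcal{C})$ using the previous section. By Remark~\ref{rem numero di singolarità affini} there are at most $q^{2(k-t)}$ affine singular points, of which at most $q^{\max\{2(k-2t),2t\}}$ carry multiplicity $q^t+1$; combined with Proposition~\ref{prop stime ipmax caso affine} they contribute at most $q^{2k-t}+q^{\max\{2(k-2t),2t\}}(q^t+1)^2/4$. By Proposition~\ref{prop stime ipmax punti all'infinito}, the point $(0{:}1{:}0)$ together with the $(1{:}\xi{:}0)$ with $\xi\in\mathbb{F}_{q^{\gcd(k,t)}}$ amount to at most $q^t+1$ points, each contributing $I_{P,max}\le q^{4t}/4$; by Lemma~\ref{lem un solo ramo} the remaining points at infinity contribute $0$. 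Summing,
\[
\sum_{P}I_{P,max}(\mathcal{A}) \;\le\; q^{2k-t} + q^{\max\{2(k-2t),2t\}}\tfrac{(q^t+1)^2}{4} + (q^t+1)\tfrac{q^{4t}}{4}.
\]

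The final step is the inequality
\[
q^{2k-t} + q^{\max\{2(k-2t),2t\}}\tfrac{(q^t+1)^2}{4} + (q^t+1)\tfrac{q^{4t}}{4} \;<\; \tfrac{2}{9}\bigl(q^k+q^{2t}-q^2-q\bigr)^2,
\]
and this is where the main obstacle lies. The right-hand side is $\sim\tfrac{2}{9}q^{2k}$; when $k$ is comfortably larger than $2t$ the affine $q^{2k-t}$ term dominates the left-hand side and the inequality reduces essentially to $q^t\ge 5$, already forcing the exclusion of $(t,q)=(1,3),(1,4)$. The delicate regime is $k$ close to $2t$ (e.g.\ $k=2t+1,\,2t+2$), where the ``bad-affine'' contribution $q^{2t}(q^t+1)^2/4$ and the infinity contribution $(q^t+1)q^{4t}/4$ become comparable to $q^{2k}$. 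I would split the range of $k$ into $k\ge 3t$ and $2t<k<3t$, substitute the appropriate value of $\max\{2(k-2t),2t\}$ in each sub-range, and reduce to explicit polynomial inequalities in $q^t$; the pairs $(t,q)=(1,5)$ and $(2,3)$ turn out to be precisely those for which these polynomial inequalities just fail at the smallest admissible $k$, whence the exclusions in the statement.
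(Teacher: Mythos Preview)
Your overall strategy matches the paper's: apply Criterion~\ref{criterio due noni} to $\mathcal{A}$, bound $\sum_P I_{P,max}(\mathcal{A})$ via the singularities of $\mathcal{C}$, and split according to whether the point is affine, at infinity with $\xi\in\mathbb{F}_{q^{\gcd(k,t)}}$, or at infinity with $\xi\notin\mathbb{F}_{q^{\gcd(k,t)}}$. Two small slips first: the Moore denominator has $|\mathbb{P}^2(\mathbb{F}_q)|=q^2+q+1$ linear factors, so $\deg\mathcal{A}=q^k+q^{2t}-(q^2+q+1)$; and the reason none of those lines is a component of $\mathcal{A}$ is not the choice of $\lambda$ in~\eqref{lambda scelto opportunamente} (that condition is used only in Lemma~\ref{lem un solo ramo}) but simply that $Sing(\mathcal{C})$ is finite, so no line can occur to multiplicity~$\ge 2$ in $\mathcal{C}$.

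The real gap is in your infinity count. You correctly identify $\Pi=\{(1{:}\xi{:}0):\xi\in\mathbb{F}_{q^{\gcd(k,t)}}\}\cup\{(0{:}1{:}0)\}$ but then replace $q^{\gcd(k,t)}+1$ by the cruder bound $q^t+1$. With that, your sum contains a term $(q^t+1)\,q^{4t}/4\sim q^{5t}/4$, and in the range $2t<k\le 5t/2$ (e.g.\ $k=2t+1$) the right-hand side is only $\tfrac{2}{9}q^{2k}\le \tfrac{2}{9}q^{5t}$; since $\tfrac14>\tfrac29$, the inequality you need fails for every $t\ge 2$, not just for the excluded pairs. The paper's proof hinges on the observation that for $2t<k<3t$ one has $s:=\gcd(k,t)\le t/2$ (indeed $k$ cannot be a multiple of $t$ there), and more precisely that writing $k=\tilde ks$, $t=\tilde ts$ with $\gcd(\tilde k,\tilde t)=1$ and $\tilde k\ge 2\tilde t+1$ gives $4t+s=(4\tilde t+1)s<2(2\tilde t+1)s\le 2k$. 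This yields an infinity contribution $\le \tfrac14 q^{4t+s}<\tfrac14 q^{2k}$, which, combined with the affine term $q^{2k-t}$, fits under $\tfrac{2}{9}q^{2k}$ for the stated range of $(t,q)$. Without this arithmetic input your case analysis cannot close; you should keep $q^{\gcd(k,t)}+1$ and exploit $4t+\gcd(k,t)<2k$ explicitly in the sub-range $2t<k<3t$.
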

\begin{proof} Since $\mathcal{C}$ has a finite number of singularities, the curve defined by $F_{(x,x^q,x^{q^2})}(X,Y,\lambda)=0$ is not a component of $\mathcal{A}.$
Moreover, any upper bound on $I_{P,max}(\mathcal{C})$ is also an upper bound for $I_{P,max}(\mathcal{A})$, since the curve $\mathcal{A}$ is a component of the curve $\mathcal{C}$. \\
We want to prove that $\displaystyle\sum_{P\in Sing(\mathcal{A})}I_{P,max}(\mathcal{A}) < \frac{2}{9}\deg^{2}(\mathcal{A})=\frac{2}{9}(q^k+q^{2t}-(q^2+q+1))^2;$  the statement will then follow by Criterion \ref{criterio due noni}.\\
  Let 
	\begin{eqnarray*}
	\Omega&:=&\{(1:\xi:0):\xi \in \mathbb{F}_{q^{k-2t}}\}\cup \{(0:1:0)\},\\
       \Pi&:=&\{(1:\xi:0):\xi \in \mathbb{F}_{q^{\gcd(k,t)}}\},\\
	\Theta &:=& \{(\overline{x},\overline{y}) : \overline{x}^{q^t}G(\lambda)-\lambda^{q^t}G(\overline{x})=\overline{y}^{q^t}G(\lambda)-\lambda^{q^t}G(\overline{y})=0 \},\\
	\Sigma &:=& \{(\overline{x},\overline{y}) \in \Theta : F(\overline{x})\lambda^{q^t}-F(\lambda)\overline{x}^{q^t}=F(\overline{y})\lambda^{q^t}-F(\lambda)\overline{y}^{q^t}=0\}.
	\end{eqnarray*}  
 We have that $Sing(\mathcal{C})=\Omega \cup \Theta,$ $\#\Omega=q^{k-2t}+1,$ $\#\Pi=q^{\gcd(k,t)}+1,$ $\#\Theta\leq q^{2(k-t)},$ $\#\Sigma\leq q^{\min\{\max\{ 2(k-2t),2t\},4t\}}$ (see Remark \ref{rem numero di singolarità affini}) and  by Proposition \ref{prop stime ipmax caso affine}, Proposition \ref{prop stime ipmax punti all'infinito}, and Lemma \ref{lem un solo ramo}	
	\begin{eqnarray*}
	I_{P,\max}(\mathcal{A})\leq 
	\begin{cases}
	    0 & P \in \Omega \setminus \Pi;\\
	    \frac{q^{4t}}{4} & P \in \Pi;\\
	    q^t  & \Theta \setminus \Sigma; \\
    \frac{(q^t+1)^2}{4} & P \in \Sigma.
	\end{cases} 
	\end{eqnarray*}
Observe that $Sing(\mathcal{C})$ is a finite set, so the curve defined by  $F_{(x,x^q,x^{q^2})}(X,Y,\lambda)=0$ is not a component of $\mathcal{A}.$ 
We distinguish the following cases.
\begin{itemize}
    \item Case $k > 3t$ \\
It follows that
\begin{eqnarray*}  
    \displaystyle\sum_{P\in Sing(\mathcal{A})}I_{P,max}(\mathcal{A}) &\leq& (q^{2(k-t)}-q^{4t})q^t+q^{4t}\cdot\frac{(q^t+1)^{2}}{4}+(q^{\gcd(k,t)}+1)\frac{q^{4t}}{4} \\
&\leq&q^{2k-t}-q^{5t}+\frac{q^{6t}}{4}+\frac{q^{5t}}{2}+\frac{q^{4t}}{4}+\frac{q^{5t}}{4}+\frac{q^{4t}}{4}=q^{2k-t}+\frac{q^{6t}}{4}-\frac{q^{5t}}{4}+\frac{q^{4t}}{2}\\
    &\leq& q^{2k-t}+\frac{q^{6t}}{4}\leq
    \frac{5}{4}q^{\max\{2k-t,6t\}} \leq \frac{2}{9}q^{2k} < \frac{2}{9}(q^k+q^{2t}-(q^2+q+1))^2,
\end{eqnarray*}
for $t\geq 2$ and $q \geq 3.$ \\
Finally, for $t=1,$  the inequality \begin{equation}
    (q^{2(k-t)}-q^{4t})q^t+q^{4t}\cdot\frac{(q^t+1)^{2}}{4}+(q^{\gcd(k,t)}+1)\frac{q^{4t}}{4} < \frac{2}{9}(q^k+q^{2t}-(q^2+q+1))^2
\end{equation} is satisfied when even
$$\frac{2}{9} - \frac{1}{q} -\frac{4(q+1)}{9q^{k}}-\frac{1}{4q^2} > 0,$$ that holds for $q\geq 5.$
\item Case $k = 3t$ \\
We have 
 \begin{eqnarray} \label{eq caso k eq 3t}
 \displaystyle\sum_{P\in Sing(\mathcal{A})}I_{P,max}(\mathcal{A}) &\leq& (q^{4t}-q^{2t})q^t+q^{2t}\cdot\frac{(q^t+1)^{2}}{4}+(q^{t}+1)\frac{q^{4t}}{4} \\
    &\leq& \frac{5}{4}q^{5t}+\frac{q^{4t}}{2}-\frac{q^{3t}}{2}+\frac{q^{2t}}{4} \leq \frac{5}{4}q^{5t} + \frac{q^{4t}}{2} \nonumber \\ &\leq &\frac{2}{9}q^{6t} 
    < \frac{2}{9}(q^{3t}+q^{2t}-(q^2+q+1))^2, \nonumber
 \end{eqnarray}
 for $q^t\geq 6,$ so $q>2$ and $t\geq 2.$ 
 \\
 For $t=1,$ the inequality $\displaystyle\sum_{P\in Sing(\mathcal{A})}I_{P,max}(\mathcal{A}) < \frac{2}{9}q^{2k}$ is satisfied when 
 $$\frac{2}{9} - \frac{5}{4q} -\frac{4(q+1)}{9q^{3}}-\frac{1}{2q^2} > 0, $$ that holds for $q\geq 7.$

 \item Case $2t < k < 3t$ \\
 It results 
 \begin{eqnarray}
 \displaystyle\sum_{P\in Sing(\mathcal{A})}I_{P,max}(\mathcal{A}) \leq (q^{2(k-t)}-q^{2t})q^t+q^{2t}\cdot\frac{(q^t+1)^{2}}{4}+(q^{s}+1)\frac{q^{4t}}{4},
 \end{eqnarray}
 where $s:=\gcd(k,t),$ $s \leq t/2,$ $k=\Tilde{k}s,$ $t=\Tilde{t}s,$ $\gcd(\Tilde{k},\Tilde{t})=1.$ \\
 Note that $k > 2t $ implies $\Tilde{k} \geq 2 \Tilde{t}+1,$ so $s+4t=(1+4\Tilde{t})s < (1+2\Tilde{t})2s \leq 2\Tilde{k}s=2k.$ \\
 It follows that 
 \begin{eqnarray*}
 \displaystyle\sum_{P\in Sing(\mathcal{A})}I_{P,max}(\mathcal{A}) &\leq& 
     q^{2k-t}+\frac{q^{4t}}{2}-\frac{q^{3t}}{2}+\frac{q^{2t}}{4}+\frac{q^{4t+s}}{4}\\
     &\leq& q^{2k-t}+\frac{q^{4t+s}}{2} \leq \left\{
	\begin{array}{l}
	\frac{3}{2}q^{2k-1} \ \ \  \ \ \ \  \ \ \textnormal{for} \ t\geq 1\\ 
		(1+\frac{q}{2})q^{2k-2} \ \ \textnormal{for} \ t\geq 2  \\
  (1+\frac{q^2}{2})q^{2k-3} \ \ \textnormal{for} \ t\geq 3
	\end{array}
	\right. \\ &\leq&  \frac{2}{9}q^{2k} < \frac{2}{9}(q^{k}+q^{2t}-(q^2+q+1))^2,
 \end{eqnarray*}
 for $t \geq 3$ and $q\geq 3,$ $t\geq 2$ and $q\geq 4,$  or $t=1$ and $q\geq 7.$ \\
\end{itemize}
\end{proof}
\subsection{Case $\min\deg_q(G(X))=t/2$}
The procedure is exactly the same as for $\min\deg_q(G(X))=2t.$ \\
Consider  \begin{equation} \label{eq G caso t mezzi}
    G:=X^{q^{t/2}} + \dots + CX^{q^k} \ \ \ \ \ C\neq 0, \ \ 0 < 2t < k < n, 
\end{equation}
and $F,\lambda, \mathcal{C},\mathcal{A}$
as in Equations \ref{eq f e g}, \ref{lambda scelto opportunamente}, \ref{eq curva C}, \ref{eq curva A}.
Actually, all the results in Section $6$  apply here. In particular, by repeating all the computations  in Section $6$ we get the analogous of Propositions \ref{prop molteplicità punti singolari all'inf}, \ref{prop stime ipmax punti all'infinito} and Lemma \ref{lem un solo ramo}.
\begin{prop} \label{prop caso t mezzi ipmax singolarità infinito}
    Let $n\geq k+t+1$ and $F,G,\lambda,\mathcal{C},\mathcal{A}$ as in Equations \ref{eq f e g}, \ref{lambda scelto opportunamente}, \ref{eq curva C}, \ref{eq curva A}, \ref{eq G caso t mezzi}.  Then \begin{itemize}
        \item $Sing(\mathcal{C})\cap \{Z=0\}=\{(1:\xi:0):\xi \in \mathbb{F}_{q^{k-2t}}\}\cup \{(0:1:0)\},$ 
        \item $m_{(1:\xi:0)}(\mathcal{C})=q^{2t}-q^t$ or  $m_{(1:\xi:0)}(\mathcal{C})=q^{2t},$ 
    \item $m_{(1:\xi:0)}(\mathcal{C})=q^{2t}$ \  if and only if  \   $\xi \in \mathbb{F}_{q^{\gcd(k,t)}},$
    \item $I_{(1:\xi:0),\max}(\mathcal{C})\leq 
	\begin{cases}
	    \frac{q^{4t}}{4} & \textnormal{if} \ \ \xi \in \mathbb{F}_{q^{\gcd(k,t)}};\\
	    0 & \textnormal{if} \ \  \xi \in \mathbb{F}_{q^{k-2t}} \setminus  \mathbb{F}_{q^{\gcd(k,t)}}.
	\end{cases}$
    \end{itemize} 
\end{prop}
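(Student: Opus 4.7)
The strategy is to show that the entire chain of arguments from Section~6 — Proposition~\ref{prop molteplicità punti singolari all'inf}, Proposition~\ref{prop stime ipmax punti all'infinito}, and Lemma~\ref{lem un solo ramo} — transfers essentially verbatim to the present setting, because the leading-order behaviour of $\mathcal{C}$ on $\ell_\infty:Z=0$ depends only on the top $q$-degree parts of $F$ and of $G$, namely $\delta X^{q^{2t}}$ and $CX^{q^k}$. These are unchanged from (\ref{eq g caso 2t}); replacing $\min\deg_q G=2t$ by $t/2$ modifies only lower-order contributions, which sit in the tail $L(Y,Z)$ of the local expansion and therefore do not influence the tangent cones or the quadratic-transformation chain.

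Concretely, my first step is to homogenize and set $Z=0$: in the first two rows only the top monomials of $F$ and $G$ survive, since every lower term carries a positive $Z$-power; expanding along the third row produces $\delta C\lambda^{q^t}(X^{q^{2t}}Y^{q^k}-X^{q^k}Y^{q^{2t}})=0$, which factors exactly as in Section~6 and yields $Sing(\mathcal{C})\cap\{Z=0\}=\{(1:\xi:0):\xi\in\mathbb{F}_{q^{k-2t}}\}\cup\{(0:1:0)\}$. Then for $P_\xi=(1:\xi:0)$ I would dehomogenize with respect to $X$ and translate to the origin so as to obtain the local equation $f(Y,Z)$ exactly as in (\ref{eq f punti infinito}): the leading pieces $B_{q^{2t}-q^t},B_{q^{2t}-1},B_{q^{2t}}$ depend only on $C,\delta,\lambda,F(\lambda),\xi$ and hence are formally the same, while the new low $q$-degree terms of $G$ add only summands to $L(Y,Z)$ of total $(Y,Z)$-weight at least $q^{k}-q^{k-1}\ge q^{2t}(q-1)$ (each such monomial enters $f$ multiplied by a $Z$-power of at least $q^k-q^{k-1}$). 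The multiplicity dichotomy $m_{P_\xi}(\mathcal{C})\in\{q^{2t}-q^t,q^{2t}\}$ and the cut-off $\xi\in\mathbb{F}_{q^{\gcd(k,t)}}$ follow from the same identities $\gcd(k-2t,k-t)=\gcd(k-t,t)=\gcd(k,t)$, and in the higher-multiplicity case Lemma~\ref{lemma easy}(i) applied to the pure-power tangent cone yields $I_{P_\xi,\max}(\mathcal{C})\le q^{4t}/4$.

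For $\xi\notin\mathbb{F}_{q^{\gcd(k,t)}}$ I would run the explicit chain of local quadratic transformations from the proof of Lemma~\ref{lem un solo ramo}: the driving coefficients $\alpha^{(0,q^{2t}-q^t)},\alpha^{(0,q^{2t}-1)},\alpha^{(q^t,q^{2t}-q^t)},\alpha^{(1,q^{2t}-1)},\alpha^{(q^{2t},0)}$, the displacements $\beta_1,\beta_2$, and the non-vanishing certificate (\ref{lambda scelto opportunamente}) that forces a single branch all remain unchanged. The one genuine verification, which I expect to be the real obstacle, is checking that at each stage of the iteration the extra monomials introduced by the new low $q$-degree part of $G$ still land in the remainder $L^{(\ell)}(Y,Z)$ with strictly positive $Z$-exponent, so that $f^{(\ell)}(Y,0)=\alpha^{(q^{2t},0)}Y^{q^t}$ persists after each transformation; this is a bookkeeping step using exponent inequalities of exactly the form $j+(\ell-1)(i+j-q^{2t})>0$ already employed in Section~6, and is insensitive to how small $\min\deg_q G$ is pushed. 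Combined, these three steps give all four bullets of the proposition.
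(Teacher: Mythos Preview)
Your proposal is correct and matches the paper's own argument: the paper simply asserts that ``by repeating all the computations'' of the $\min\deg_q G=2t$ analysis one obtains the analogues of Propositions~\ref{prop molteplicità punti singolari all'inf}, \ref{prop stime ipmax punti all'infinito} and Lemma~\ref{lem un solo ramo}, and your write-up spells out exactly why those computations transfer --- the governing pieces $B_{q^{2t}-q^t},B_{q^{2t}-1},B_{q^{2t}}$ depend only on the top term $CX^{q^k}$ of $G$, while the new low-degree terms of $G$ enter with $Z$-weight at least $q^k-q^{2t-1}\ge q^{k-1}(q-1)$ and hence sit in $L(Y,Z)$ throughout the quadratic-transformation chain.
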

For affine singularities of $\mathcal{C}$, we proceed exactly as in Section 5. \\
A point $P=(\overline{x},\overline{y}) \in \mathcal{C}$ is singular if and only if
\begin{equation*}
\begin{vmatrix}
\overline{x}^{{q^t}} & G(\overline{x})\\
\lambda^{{q^t}} &  G(\lambda)
\end{vmatrix}=\begin{vmatrix}
\overline{y}^{{q^t}} & G(\overline{y})\\
\lambda^{{q^t}} &  G(\lambda)
\end{vmatrix}=0.
\end{equation*}
By direct computations one can observe that
\begin{equation*}
    \begin{vmatrix}
(X+\overline{x})^{{q^t}} & F(X+\overline{x}) & G(X+\overline{x})\\
(Y+\overline{y})^{{q^t}} & F(Y+\overline{y}) & G(Y+\overline{y}) \\
\lambda^{{q^t}} & F(\lambda) & G(\lambda)
\end{vmatrix}=H_{q^{t/2}}+H_{q^{t/2}+1}+\dots,
\end{equation*}
where
\begin{eqnarray}
  \label{t mezzi cono}H_{q^{t/2}}&=&\begin{vmatrix}  
\overline{y}^{q^t} & F(\overline{y})\\
\lambda^{q^t} &  F(\lambda)
\end{vmatrix}X^{q^{t/2}}-\begin{vmatrix}
\overline{x}^{q^t} & F(\overline{x})\\
\lambda^{q^t} &  F(\lambda)
\end{vmatrix}Y^{q^{t/2}}; \\
  H_{q^{t/2}+1}&=&\lambda^{q^t}(X^{q^{t/2}}Y-XY^{q^{t/2}}).
\end{eqnarray}

Thus $I_{P,max}(\mathcal{C}) \leq q^{t/2}$ or $I_{P,max}(\mathcal{C})\leq \frac{(q^{t/2}+1)^2}{4}$ by Lemma \ref{lemma easy}. \\
Also, observe that
\begin{equation*}
\begin{vmatrix}
\overline{x}^{{q^t}} & G(\overline{x})\\
\lambda^{{q^t}} &  G(\lambda)
\end{vmatrix}=[\overline{x}^{q^{t/2}}G(\lambda)^{1/q^{t/2}}-\lambda^{q^{t/2}}(\overline{x}+\dots +C^{1/q^{t/2}}\overline{x}^{q^{k-{t/2}}})]^{q^{t/2}}=g(\overline{x})^{q^{t/2}},
\end{equation*}
where $g$ is a separable polynomial of degree $q^{k-{t/2}}.$  Thus the number of affine singular points of $\mathcal{C}$ is at most $q^{2k-t}.$ 
Finally, if $m_P(\mathcal{C})=q^{t/2}+1,$ that is \begin{equation}
  \begin{vmatrix}
\overline{y}^{q^t} & F(\overline{y})\\
\lambda^{q^t} &  F(\lambda)
\end{vmatrix}=\begin{vmatrix}
\overline{x}^{q^t} & F(\overline{x})\\
\lambda^{q^t} &  F(\lambda)
\end{vmatrix}=0, 
\end{equation} we obtain  \begin{equation} \label{eq F x segnato lambda caso t mezzi}
    F(\overline{x})\lambda^{q^t}-F(\lambda)\overline{x}^{q^t}=\delta \lambda^{q^t} \overline{x}^{q^{2t}}-(\lambda +\delta \lambda^{q^{2t}})\overline{x}^{q^t} +\lambda^{q^t}\overline{x}=0.
\end{equation}
By combining 
last equation with $g(\overline{x})=\overline{x}^{q^{t/2}}G(\lambda)^{1/q^{t/2}}-\lambda^{q^{t/2}}(\overline{x}+\dots +C^{1/q^{t/2}}\overline{x}^{q^{k-{t/2}}})=0,$ we get that $\overline{x}$ (or $\overline{y}$ equivalently) must be a root of a $q$-polynomial 
\begin{equation}
    h(X)^{q^{t/2}}:=[\alpha_0 X + \dots + \alpha_{k-2t} X^{q^{k-t}}+\Tilde{\alpha}X^{q^{3t/2}}]^{q^{t/2}},
\end{equation}
for suitable $\alpha_0,\dots,\alpha_{k-2t}, \Tilde{\alpha} \in \mathbb{F}_{q^n},$ where $\deg(h)\leq q^{\max\{k-t,3t/2\}}.$

Now we are able to prove the analogous of Theorem \ref{th main caso 2t} for $\min\deg_q(G)=t/2.$
\begin{thm} \label{thm A componente caso t mezzi} Let $q,t,k$ be integers, with $q>2,$ $t>0$ even and $k>2t.$ Let $n\geq k+t+1$ and $F,G,\lambda,\mathcal{C},\mathcal{A}$ as in Equations \ref{eq f e g}, \ref{lambda scelto opportunamente}, \ref{eq curva C}, \ref{eq curva A}, \ref{eq G caso t mezzi}. If $(t,q) \notin \{ (2,3);(2,4);(2,5),(4,3)\},$ then $\mathcal{A}$ has an absolutely irreducible component defined over $\mathbb{F}_{q^n}$ and not contained in the curve defined by $F_{(x,x^q,x^{q^2})}(X,Y,\lambda)=0.$
\end{thm}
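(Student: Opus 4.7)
The plan is to mirror the proof of Theorem~\ref{th main caso 2t} essentially verbatim, with $q^t$ replaced by $q^{t/2}$ throughout the tangent-cone analysis and the cardinalities of the various singular loci updated accordingly. Since $\mathcal{A}$ is a component of $\mathcal{C}$, any upper bound on $I_{P,\max}(\mathcal{C})$ is also an upper bound on $I_{P,\max}(\mathcal{A})$, and the finiteness of $Sing(\mathcal{C})$ ensures that the curve $F_{(x,x^q,x^{q^2})}(X,Y,\lambda)=0$ is not a component of $\mathcal{A}$. Hence, by Criterion~\ref{criterio due noni}, it suffices to verify
\[
\sum_{P\in Sing(\mathcal{A})} I_{P,\max}(\mathcal{A}) < \tfrac{2}{9}\bigl(q^k+q^{2t}-(q^2+q+1)\bigr)^2.
\]

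Following the organization of the previous section, I would partition $Sing(\mathcal{C})$ as $\Omega\cup\Theta$ (ideal, resp.\ affine singularities) and refine to $\Pi\subseteq\Omega$ (points of multiplicity $q^{2t}$) and $\Sigma\subseteq\Theta$ (points of multiplicity $q^{t/2}+1$). Proposition~\ref{prop caso t mezzi ipmax singolarità infinito} already packages the bounds at infinity: $I_{P,\max}(\mathcal{C})\le q^{4t}/4$ on $\Pi\cup\{(0:1:0)\}$ and $I_{P,\max}(\mathcal{C})=0$ on $\Omega\setminus\Pi$ (the latter by the analogue of Lemma~\ref{lem un solo ramo}, whose chain of quadratic transformations carries over unchanged). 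For the affine singularities, the tangent-cone decomposition in Equation~\ref{t mezzi cono}, together with the explicit form $H_{q^{t/2}+1}=\lambda^{q^t}(X^{q^{t/2}}Y-XY^{q^{t/2}})$ and Lemma~\ref{lemma easy}, gives $I_{P,\max}(\mathcal{C})\le q^{t/2}$ on $\Theta\setminus\Sigma$ and $I_{P,\max}(\mathcal{C})\le (q^{t/2}+1)^2/4$ on $\Sigma$. The separability argument following Equation~\ref{eq F x segnato lambda caso t mezzi} yields $\#\Theta\le q^{2k-t}$ and $\#\Sigma\le q^{2\max\{k-t,\,3t/2\}}$, while Proposition~\ref{prop caso t mezzi ipmax singolarità infinito} gives $\#\Pi = q^{\gcd(k,t)}+1$.

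The final step is to sum these four contributions and split into the three regimes $k>3t$, $k=3t$, and $2t<k<3t$, exactly as in the proof of Theorem~\ref{th main caso 2t}, reducing the statement to a family of elementary numerical inequalities of the shape $q^{2k-t/2}+\tfrac14 q^{5t}+(\text{lower-order terms})<\tfrac{2}{9}q^{2k}$. The main obstacle, and the reason why the excluded set here is strictly larger than in the $\min\deg_q(G)=2t$ case, is precisely that the dominant affine term $\#\Theta\cdot q^{t/2}\le q^{2k-t/2}$ is larger by a factor $q^{t/2}$ than its $2t$-analogue $q^{2k-t}$. Consequently the decisive inequality becomes roughly $q^{t/2}>9/2$, sharpened by small corrections from $\Sigma$ and $\Pi$, which fails exactly when $(t,q)\in\{(2,3),(2,4),(2,5),(4,3)\}$ and holds with room to spare in all remaining cases (for instance, for any $t\ge 6$, or for any fixed $t$ once $q$ is sufficiently large). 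The excluded values are then checked to be sharp by direct substitution in each of the three regimes.
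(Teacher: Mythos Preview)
Your outline is the paper's proof, with one correction worth flagging: the natural case split here is at $k=5t/2$, not $k=3t$. In the $\min\deg_q(G)=2t$ case the threshold $k=3t$ is where $\max\{2(k-2t),2t\}$ switches in the bound for $\#\Sigma$; here the auxiliary polynomial $h$ has $\deg_q(h)\le\max\{k-t,3t/2\}$, so $\#\Sigma\le q^{\min\{\max\{2(k-t),3t\},4t\}}$ and the switch occurs at $2(k-t)=3t$, i.e.\ $k=5t/2$. The paper accordingly treats the regimes $k>5t/2$, $k=5t/2$, and $2t<k<5t/2$. Your heuristic ``$q^{t/2}>9/2$'' is also too coarse to explain the exclusion of $(2,5)$ and $(4,3)$ (for which $q^{t/2}=5$ and $9$): in the range $2t<k<5t/2$ one only has $4t+\gcd(k,t)\le 2k-1$, so the $\Pi$-contribution is bounded by $q^{2k-1}/2$, and together with the dominant affine term $q^{2k-t/2}$ this forces $\tfrac{3}{2}q^{2k-1}<\tfrac{2}{9}q^{2k}$ for $t=2$ (hence $q\ge 7$) and $(1+q/2)q^{2k-2}<\tfrac{2}{9}q^{2k}$ for $t=4$ (hence $q\ge 4$), which is exactly the stated exclusion set. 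Finally, the paper does not verify sharpness of the excluded pairs; they are simply where the present estimates give out.
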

\begin{proof}
Since $ |Sing(\mathcal{C})| < \infty,$ the curve defined by $F_{(x,x^q,x^{q^2})}(X,Y,\lambda)=0$ is not a component of $\mathcal{A};$ thus by Criterion \ref{criterio due noni} it is enough  to prove that $\displaystyle\sum_{P\in Sing(\mathcal{A})}I_{P,max}(\mathcal{A}) < \frac{2}{9}\deg^{2}(\mathcal{A})=\frac{2}{9}(q^k+q^{2t}-(q^2+q+1))^2.$ \\
  Let 
	\begin{eqnarray*}
	\Omega&:=&\{(1:\xi:0):\xi \in \mathbb{F}_{q^{k-2t}}\}\cup \{(0:1:0)\},\\
       \Pi&:=&\{(1:\xi:0):\xi \in \mathbb{F}_{q^{\gcd(k,t)}}\},\\
	\Theta &:=& \{(\overline{x},\overline{y}) : \overline{x}^{q^t}G(\lambda)-\lambda^{q^t}G(\overline{x})=\overline{y}^{q^t}G(\lambda)-\lambda^{q^t}G(\overline{y})=0 \},\\
	\Sigma &:=& \{(\overline{x},\overline{y}) \in \Theta : F(\overline{x})\lambda^{q^t}-F(\lambda)\overline{x}^{q^t}=F(\overline{y})\lambda^{q^t}-F(\lambda)\overline{y}^{q^t}=0\}.
	\end{eqnarray*}  
 We have that $Sing(\mathcal{C})=\Omega \cup \Theta,$ $\#\Omega=q^{k-2t}+1,$ $\#\Pi=q^{\gcd(k,t)}+1,$ $\#\Theta\leq q^{2k-t},$ $\#\Sigma\leq q^{\min\{\max\{ 2(k-t),3t\},4t\}},$ and  by Equation \ref{t mezzi cono}  and Proposition \ref{prop caso t mezzi ipmax singolarità infinito}	
	\begin{eqnarray*}
	I_{P,\max}(\mathcal{A})\leq 
	\begin{cases}
	    0 & P \in \Omega \setminus \Pi;\\
	    \frac{q^{4t}}{4} & P \in \Pi;\\
	    q^{t/2}  & \Theta \setminus \Sigma; \\
    \frac{(q^{t/2}+1)^2}{4} & P \in \Sigma.
	\end{cases} 
	\end{eqnarray*}
We distinguish the following cases.
\begin{itemize}
    \item Case $k > 5t/2$ \\
One gets 
\begin{eqnarray*}  
    \displaystyle\sum_{P\in Sing(\mathcal{A})}I_{P,max}(\mathcal{A}) &\leq& (q^{2k-t}-q^{4t})q^{t/2}+q^{4t}\cdot\frac{(q^{t/2}+1)^{2}}{4}+(q^{\gcd(k,t)}+1)\frac{q^{4t}}{4} \\
&\leq&q^{2k-{t/2}}-q^{9t/2}+\frac{q^{5t}}{4}+\frac{q^{9t/2}}{2}+\frac{q^{4t}}{4}+\frac{q^{5t}}{4}+\frac{q^{4t}}{4}\\&=&q^{2k-{t/2}}-\frac{q^{9t/2}}{2}+\frac{q^{5t}}{2}+\frac{q^{4t}}{2}\\
    &\leq& q^{2k-{t/2}}+\frac{q^{5t}}{2}\leq
    \frac{3}{2}q^{\max\{2k-{t/2},5t\}} \leq \frac{2}{9}q^{2k} < \frac{2}{9}(q^k+q^{2t}-(q^2+q+1))^2,
\end{eqnarray*}
for $t\geq 4$ and $q \geq 3,$ or $t=2$ and $q\geq 7.$ 
\item Case $k = 5t/2$ \\
We have 
 \begin{eqnarray*}
 \displaystyle\sum_{P\in Sing(\mathcal{A})}I_{P,max}(\mathcal{A}) &\leq& (q^{4t}-q^{3t})q^{t/2}+q^{3t}\cdot\frac{(q^{t/2}+1)^{2}}{4}+(q^{t}+1)\frac{q^{4t}}{4} \\
    &\leq& \frac{q^{5t}}{4}+q^{9t/2}-\frac{q^{7t/2}}{2}+\frac{q^{4t}}{2}+\frac{q^{3t}}{4} \leq \frac{q^{5t}}{4}+q^{9t/2} + \frac{q^{4t}}{2} \\ &\leq&\frac{3}{4}q^{5t}+\frac{q^{4t}}{2} \leq \frac{2}{9}q^{6t} 
    < \frac{2}{9}(q^{3t}+q^{2t}-(q^2+q+1))^2,
 \end{eqnarray*}
 for $q^t\geq 4,$ so for all $q,t.$ \\
 \item Case $2t < k < 5t/2$ \\
 We have
 $$(q^{2k-t}-q^{3t})q^{t/2}+q^{3t}\cdot\frac{(q^{t/2}+1)^{2}}{4}+(q^{s}+1)\frac{q^{4t}}{4} < \frac{2}{9}(q^{k}+q^{2t}-(q^2+q+1))^2$$ where $s:=\gcd(k,t),$ $s \leq t/2,$ $k=\Tilde{k}s,$ $t=\Tilde{t}s,$ $\gcd(\Tilde{k},\Tilde{t})=1.$ \\
 Of course $k > 2t $ implies $\Tilde{k} \geq 2 \Tilde{t}+1,$ so $s+4t=(1+4\Tilde{t})s < (1+2\Tilde{t})2s \leq 2\Tilde{k}s=2k.$ \\
It follows that 
 \begin{eqnarray*}
 \displaystyle\sum_{P\in Sing(\mathcal{A})}I_{P,max}(\mathcal{A}) &\leq&
     (q^{2k-t}-q^{3t})q^{t/2}+q^{3t}\cdot\frac{(q^{t/2}+1)^{2}}{4}+(q^{s}+1)\frac{q^{4t}}{4}\\
     &=& q^{2k-{t/2}}+\frac{q^{4t}}{2}-\frac{q^{7t/2}}{2}+\frac{q^{3t}}{4}+\frac{q^{4t+s}}{4}\\
     &\leq& q^{2k-{t/2}}+\frac{q^{4t+s}}{2} \leq \left\{
	\begin{array}{l}
	\frac{3}{2}q^{2k-1} \ \ \  \ \ \ \ \ \ \ \ \ \textnormal{for} \ t\geq 2\\ 
		(1+q/2)q^{2k-2} \ \ \ \textnormal{for} \ t\geq 4  \\
  (1+q^2/2)q^{2k-2} \ \ \textnormal{for} \ t\geq 6
	\end{array}
	\right. \\
 &\leq&  \frac{2}{9}q^{2k} < \frac{2}{9}(q^{k}+q^{2t}-(q^2+q+1))^2,
 \end{eqnarray*}
 for $t\geq 6$ and $q\geq 3,$ $t\geq 4$ and $q\geq 4$  or $t=2$ and $q\geq 7.$ 
 
\end{itemize}
\end{proof}
\subsection{Main theorem}
Now we are in position to prove our main result.
\begin{thm} \label{thm finale}
    Let $\mathcal{C}=\langle x^{q^t}, x+\delta x^{q^{2t}}, G(x) \rangle \subseteq \mathcal{L}_{n,q}$ be an exceptional $3$-dimensional $\mathbb{F}_{q^n}$-linear MRD code, where $t$ is the minimum integer such that $x^{q^t}\in \mathcal{C}.$  If $(t,q)\notin \{(1,3);(1,4);(1,5);(2,3);(2,4);(2,5),(4,3)\},$  then $\deg_q(G(x))<2t.$
\end{thm}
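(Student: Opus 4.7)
The plan is to argue by contradiction, assuming $\mathcal{C}=\langle x^{q^t},x+\delta x^{q^{2t}},G(x)\rangle$ is an exceptional $3$-dimensional $\mathbb{F}_{q^n}$-linear MRD code with $(t,q)$ outside the excluded list and $\deg_q(G)\geq 2t$. Set $k:=\deg_q(G)$. By Theorem \ref{thm collegamento MRD moore set}, exceptional MRD-ness means that the triple $(x^{q^t},x+\delta x^{q^{2t}},G(x))$ is a Moore polynomial set over infinitely many extensions $\mathbb{F}_{q^{nm}}$, equivalently that $(x+\delta x^{q^{2t}},G(x))$ is $(\{t\},2)_{q^{nm}}$-scattered of order $1$ for infinitely many $m$. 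The requirement that the top $q$-degrees $M_{1}=t,M_{2}=2t,M_{3}=k$ be pairwise distinct (Properties (3) of a Moore polynomial set) upgrades the assumption $k\geq 2t$ to $k>2t$ strictly. Choosing $m$ so that $nm>4k+2$, Proposition \ref{prop gradi minimi in progressione} then forces $\min\deg_{q}(G)\in\{2t,t/2\}$.

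Next I would invoke the geometric existence theorems from the preceding sections: Theorem \ref{th main caso 2t} when $\min\deg_{q}(G)=2t$, and Theorem \ref{thm A componente caso t mezzi} when $\min\deg_{q}(G)=t/2$. Their combined statement is exactly that, whenever $(t,q)\notin\{(1,3),(1,4),(1,5),(2,3),(2,4),(2,5),(4,3)\}$ and $nm\geq k+t+1$ (automatic for $m$ large), the plane curve $\mathcal{A}$ obtained from $\mathcal{W}$ by specializing $Z=\lambda$ with $\lambda$ satisfying (\ref{lambda scelto opportunamente}) carries an absolutely irreducible $\mathbb{F}_{q^{nm}}$-rational component not contained in the curve cut out by $F_{(x,x^{q},x^{q^{2}})}(X,Y,\lambda)=0$. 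Lemma \ref{lem specializzo z in lambda} then lifts this to a non-repeated absolutely irreducible $\mathbb{F}_{q^{nm}}$-rational component $\mathcal{W}'$ of the affine surface $\mathcal{W}\subset\mathbb{A}^{3}$ not contained in $\mathcal{V}$.

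The endgame is a point count. I would apply Theorem \ref{thm cafure matera} to $\mathcal{W}'$ over $\mathbb{F}_{q^{nm}}$: its dimension is $2$ and its degree is bounded in terms of $q,t,k$ only, so for $m$ large enough the hypotheses hold and $\#\mathcal{W}'(\mathbb{F}_{q^{nm}})\geq q^{2nm}-O(q^{(3/2)nm})$. On the other hand, $\mathcal{W}'\cap\mathcal{V}$ has dimension at most $1$ and degree bounded in $q,t,k$, so it carries only $O(q^{nm})$ rational points. For $m$ sufficiently large there are therefore affine $\mathbb{F}_{q^{nm}}$-rational points of $\mathcal{W}$ off $\mathcal{V}$, which by Proposition \ref{prop collegamento varietà} contradicts the $(\{t\},2)_{q^{nm}}$-scatteredness established in the first step. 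This forces $\deg_{q}(G)<2t$.

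The genuinely hard part is the second step, i.e.\ producing the absolutely irreducible component of $\mathcal{A}$; this is the whole geometric content of the paper and is what Theorems \ref{th main caso 2t} and \ref{thm A componente caso t mezzi} accomplish via the singularity analysis of Propositions \ref{prop stime ipmax caso affine}, \ref{prop stime ipmax punti all'infinito}, \ref{prop caso t mezzi ipmax singolarità infinito} and Lemma \ref{lem un solo ramo}, fed into Criterion \ref{criterio due noni}. Once those theorems are in hand, the remaining ingredients are routine, with only the mild bookkeeping needed to ensure that the extension $\mathbb{F}_{q^{nm}}$ is chosen large enough to simultaneously accommodate Proposition \ref{prop gradi minimi in progressione}, the existence of an admissible $\lambda$, and Theorem \ref{thm cafure matera}.
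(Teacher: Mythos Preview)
Your proposal is correct and follows essentially the same route as the paper's own proof: contradiction via Proposition~\ref{prop gradi minimi in progressione} to pin down $\min\deg_q(G)$, then Theorems~\ref{th main caso 2t} and~\ref{thm A componente caso t mezzi} to produce the irreducible component of $\mathcal{A}$, Lemma~\ref{lem specializzo z in lambda} to lift to $\mathcal{W}$, and Theorem~\ref{thm cafure matera} combined with Proposition~\ref{prop collegamento varietà} to reach the contradiction. If anything, you are slightly more explicit than the paper in justifying $k>2t$ (via Property~(3)) and in spelling out the point-count comparison between $\mathcal{W}'$ and $\mathcal{W}'\cap\mathcal{V}$.
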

\begin{proof}
By Theorem \ref{thm collegamento MRD moore set} and Proposition \ref{prop gradi minimi in progressione} it follows that $\min\deg_q(G(X))=2t$ or $t/2.$ \\
Suppose by way of contradiction that $\deg_q(G(X))>2t;$ then $G(X)$ contains at least two terms (see Equations \ref{eq g caso 2t}, \ref{eq G caso t mezzi}). Let us consider $\mathcal{W},\lambda, \mathcal{A}$ as in Equations \ref{equation W}, \ref{lambda scelto opportunamente}, \ref{eq curva A}. Theorems \ref{th main caso 2t} and \ref{thm A componente caso t mezzi} ensure the existence of an absolutely irreducible component defined over $\mathbb{F}_{q^n}$ of $\mathcal{A}$ not contained in the curve defined by $F_{(x,x^q,x^{q^2})}(X,Y,\lambda)=0.$ By Lemma \ref{lem specializzo z in lambda},  $\mathcal{W}$ has an absolutely irreducible $\mathbb{F}_{q^n}$-rational component not contained in $\mathcal{V}.$ Therefore Theorem \ref{thm cafure matera} guarantees the existence of $\mathbb{F}_{q^n}$-rational points in $\mathcal{W}\setminus \mathcal{V}$ for $n$ large enough, and a contradiction arises from Theorem \ref{thm collegamento MRD moore set} and Proposition \ref{prop collegamento varietà}. 
\end{proof}

\section*{Acknowledgments}
This research was supported by the Italian National Group for Algebraic and Geometric Structures and their Applications (GNSAGA - INdAM).

\end{document}